\newcommand{\argmin}{\operatorname*{argmin}}
\newcommand{\argmax}{\operatorname*{argmax}}
\newcommand{\polylog}{\operatorname{polylog}}
\newcommand{\tO}{\ensuremath{\widetilde O}}
\newcommand{\hO}{\ensuremath{\widehat O}}
\newcommand{\abs}[1]{\ensuremath{\left| #1 \right|}}
\def\Bigbar#1{\mathrel{\left|\vphantom{#1}\right.\n@space}}
\newcommand{\eps}{\epsilon}
\newcommand{\R}{\mathbb{R}}
\let\oldvec\vec \renewcommand{\vec}[1]{\oldvec{\mkern0mu#1}}
\newcommand{\samplesize}{\ensuremath{5390 \cdot n \log^2(n)/\eps}}
\newcommand{\Osamplesize}{\ensuremath{O(n \log^2 (n) / \eps)}}
\newcommand{\fn}[1]{\textup{\textsc{#1}}}
\newcommand{\bound}{\abs{w(\vec{C} \cap F_i) - u(\vec{C} \cap F_i)} \le \frac{\eps}{2} \cdot \left( \frac{u(C \cap E_i) \cdot 2^{i-1}}{\pi_i \cdot \gamma \cdot (\beta+1)} + u(\vec{C} \cap F_i) \right)}
\DeclareRobustCommand{\cev}[1]{%
  {\mathpalette\do@cev{#1}}%
}
\newcommand{\do@cev}[2]{%
  \vbox{\offinterlineskip
    \sbox\z@{$\m@th#1 x$}%
    \ialign{##\cr
      \hidewidth\reflectbox{$\m@th#1\vec{}\mkern4mu$}\hidewidth\cr
      \noalign{\kern-\ht\z@}
      $\m@th#1#2$\cr
    }%
  }%
}
\newcommand{\generalsamplesize}{\ensuremath{\frac{128 \gamma (\beta+1) \ln n}{0.38 \eps^2} \cdot \sum_{e \in E} \lambda_e^{-1}}}
\NewDocumentCommand{\fancydef}{O{} O{} m m}{%
    \expandafter\def\csname #3\endcsname{\hyperref[def:#3]{\textcolor{black}{#4}}\xspace}%
    \IfNoValueTF{#1}{%
        \expandafter\def\csname #3s\endcsname{\hyperref[def:#3]{\textcolor{black}{#4s}}\xspace}%
    }{%
        \expandafter\def\csname #1\endcsname{\hyperref[def:#3]{\textcolor{black}{#2}}\xspace}%
    }%
}
\begin{document}
\title{Incremental Approximate Maximum Flow \\ via Residual Graph Sparsification}
\date{}
\author[1]{Gramoz Goranci}
\affil[1,4]{Faculty of Computer Science, University of Vienna, Austria}
\author[2]{Monika Henzinger}
\affil[2]{Institute of Science and Technology Austria (ISTA), Klosterneuburg, Austria}
\author[3]{Harald R\"acke}
\affil[3]{Technical University Munich, Germany}
\author[4]{A. R. Sricharan}
\affil[4]{UniVie Doctoral School Computer Science DoCS}

\maketitle

\begin{abstract}
We give an algorithm that, with high probability, maintains a $(1-\epsilon)$-approximate $s$-$t$ maximum flow in undirected, uncapacitated $n$-vertex graphs undergoing $m$ edge insertions in $\tilde{O}(m+ n F^*/\epsilon)$ total update time, where $F^{*}$ is the maximum flow on the final graph. This is the first algorithm to achieve polylogarithmic amortized update time for dense graphs ($m = \Omega(n^2)$), and more generally, for graphs where $F^*= \tilde{O}(m/n)$.

At the heart of our incremental algorithm is the residual graph sparsification technique of Karger and Levine [STOC '02, SICOMP '15], originally designed for computing exact maximum flows in the static setting. Our main contributions are (i) showing how to maintain such sparsifiers for approximate maximum flows in the incremental setting and (ii) generalizing the cut sparsification framework of Fung et al. [STOC '11, SICOMP '19] from undirected graphs to balanced directed graphs.
 \end{abstract}

\section{Introduction}
\label{sec:intro}

The maximum $s$-$t$ flow problem has been at the forefront of research in theoretical computer science and combinatorial optimization. Algorithms developed for maximum $s$-$t$ flow and its dual problem, minimum $s$-$t$ cut, have been highly influential due to their wide applicability~\cite{ahuja1995applications} and their use as subroutines in other algorithms~\cite{arora2012multiplicative, sherman2009breaking}.
A long line of work has improved our understanding of how efficiently maximum flow can be solved in the static setting. Two landmark combinatorial results are the deterministic algorithm by Goldberg and Rao~\cite{GR98flow}, which achieves a running time of $O(\min \{ n^{2/3}, m^{1/2}  \} \cdot m)$, and the randomized algorithm by Karger and Levine~\cite{KL15sampling} which has a running time of $\tO(m + nF^*)$,\footnote{In this paper, we use $\tO(X)$ to denote $O(X \polylog(X))$ and $\hO(X)$ to denote $O(X^{1+o(1)})$.} where $F^*$ is the value of the maximum flow.
More recently, efforts building upon a novel blend of continuous optimization techniques, the Laplacian paradigm~\cite{teng2010laplacian} and graph-based data structures have led to even faster algorithms; notably, a randomized $\tO(m/\eps)$ time approximate maximum flow algorithm on undirected graphs~\cite{She13almostflow, KLOS14almost, Pen16nearly, She17areaconvex}, and a deterministic $\hO(m)$ time \emph{exact} algorithm for min-cost flow (and thus maximum flow) even on directed graphs~\cite{CKL22almost, CKL24incremental}, which constitutes a major algorithmic breakthrough.

Recently, there has been growing interest in solving the maximum $s$-$t$ flow problem in the challenging \emph{dynamic} setting, where the goal is to maintain a flow (or its value) under edge insertions and deletions and answer queries about the maintained flow efficiently.
For directed graphs, there are strong conditional hardness results~\cite{Dah16lbs, HKNS15omv} showing a lower bound of $\Omega(n)$ and $\Omega(\sqrt{m})$ amortized update time for maintaining \emph{exact} maximum flow, assuming the OMv conjecture. These strong polynomial lower bounds have motivated a shift in focus toward maintaining \emph{approximate} maximum flows.

Research on maintaining approximate maximum flows in the dynamic setting can be categorized into the following three lines of works. The first line of work~\cite{CGH20dynjtree, GRST21xhierarchy, vdBCK24decremental} deals with the fully dynamic setting, supporting both edge insertions and deletions and is based on dynamically maintaining tree-based cut approximations. However, these algorithms require at least a logarithmic loss in the quality of the maintained flow.
The second line of work~\cite{Hen97vertexconnectivity, GK21misflowmatch, GH23incflow} is purely combinatorial and works by repeatedly finding augmenting paths in an incremental residual graph together with a lazy rebuilding technique. While these algorithms achieve sub-linear update time, it is not clear how to use them to go beyond the $\sqrt{n}$ barrier on the update time.
To address this challenge, the third line of work~\cite{vdBLS23incmaxflow, vdBCK24incflow, CKL24incremental, vdBCK24decremental} leverages continuous optimization techniques to maintain a $(1-\eps)$-approximate max flow in the incremental (only edge insertions) and decremental (only edge deletions) settings in $m^{o(1)}$ update time. \cref{tab:dynamic_max_flow} offers a detailed summary of the state-of-the-art results on dynamic maximum flow algorithms.

Focusing on the $(1-\eps)$-approximation regime, the recent partially dynamic maximum flow algorithms suffer from the following two main drawbacks: (i) they depend on the powerful yet intricate machinery of continuous optimization methods, such as interior point methods, monotone multiplicative weight updates, and dynamic min-ratio cycle/cut problems, (ii) all existing algorithms incur an $m^{o(1)}$ factor in the update time, a limitation that arises in many modern dynamic graph-based data structures and appears difficult to overcome. This leads to the following fundamental question:

\begin{tcolorbox}
\begin{center}
Is there an incremental maximum flow algorithm that achieves $(1-\epsilon)$ approximation with polylogarithmic update time?
\end{center}
\end{tcolorbox}

We answer this question in the affirmative for dense graphs that are undirected and uncapacitated, as summarized in the theorem below.

\begin{restatable}{theorem}{incapxflow}
\label{thm:inc_apx_flow}
Given any $\eps \in (0,1)$, there is an incremental randomized algorithm against output-adaptive adversaries that maintains a $(1-\eps)$-approximate maximum $s$-$t$ flow $f$ under edge insertions on an undirected uncapacitated $n$-vertex graph $G$ with high probability in total time $O(m \log(n) \alpha(n) + n F^* \log^3 (n)/\eps)$, where $m$ is the number of edge insertions, $F^*$ is the value of the max flow after $m$ insertions, and $\alpha(n)$ is the inverse Ackermann function.
\end{restatable}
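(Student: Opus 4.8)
The approach is to run the residual-graph sparsification scheme of Karger and Levine, but scheduled so that its two kinds of work are charged separately: maintaining the sparsifiers is paid for by edge insertions, and actually pushing flow is paid for against the total augmenting flow ever routed, which is at most $F^*$ since $F^*$ is non-decreasing under insertions and flow is never removed. Concretely, I would keep a flow $f$ that is feasible for the current graph at all times and re-augment it only lazily: immediately after a re-augmentation, $\abs f$ is within a $(1-\eps/2)$ factor of the current optimum; since each insertion raises the optimum by at most $1$, after a further $\Theta(\eps\abs f)$ insertions the optimum has grown by at most a $(1+\eps/2)$ factor, so $f$ is still a $(1-\eps)$-approximation, and only then do we re-augment. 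This yields $O(\eps^{-1}\log n)$ re-augmentations, with the residual value $r_i=F_i-\abs f$ entering the $i$-th one being $O(\eps F_i)$, so that $\sum_i r_i=O(F^*)$.

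A single re-augmentation, from residual value $r$ down to near-optimal, I would carry out in $O(\log n)$ rounds that each (approximately) halve the residual. A round works on the current residual graph $G_f$, whose crucial property is that it is \emph{Eulerian-balanced}: because $G$ is undirected and $f$ conserves flow, at every vertex $v\notin\{s,t\}$ the total residual capacity entering and leaving $v$ coincide (both equal $\deg_G(v)$: each flow-carrying edge at $v$ has residual $0$ on one side and $2$ on the other, and flow-carrying in- and out-edges at $v$ are equinumerous). Hence the cut-sparsification framework of Fung et al., generalized to balanced directed graphs --- the paper's second main contribution --- applies to $G_f$ and produces a $(1\pm\Theta(\eps))$ directed cut sparsifier $H$ with only $\tO(n/\eps^2)$ arcs. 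From a maximum flow in $H$ --- which by cut preservation has value $(1\pm\Theta(\eps))r$ --- one extracts an augmenting flow of value $\ge r/2$ that is feasible in $G_f$, adds it to $f$, and recurses on the halved residual. Each augmenting path used is also an augmenting $s$-$t$ path of $G_f$ carrying $\ge 1$ integral unit, and the total augmenting flow in a re-augmentation is $\le 2r$, so there are $\tO(r)$ augmenting-path computations, each a BFS on the $\tO(n)$-arc sparsifier, for $\tO(nr/\poly(\eps))$ per re-augmentation; summing over all re-augmentations gives the $nF^*\polylog(n)/\eps$ term.

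What makes this incremental rather than a sequence of static computations is that the data needed to build $H$ --- essentially the sampling rates $p_e\propto\min\{1,\rho/\lambda_e\}$ with $\rho=\tilde\Theta(\eps^{-2})$ and $\lambda_e$ the appropriate connectivity of $e$ in $G_f$ --- are maintained, not recomputed. Since $G_f$ coincides with an undirected unit-capacity graph except on the $\le\abs f\le F^*$ flow-carrying edges, which are stored explicitly, it suffices to maintain, for each of $O(\log n)$ dyadic thresholds, a Nagamochi--Ibaraki-style sparse $k$-connectivity certificate of the unused part of $G$; these give constant-factor estimates of every $\lambda_e$, which is all that the sampling bound requires. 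An edge insertion is fed into these $O(\log n)$ certificates in $O(\alpha(n))$ amortized time via a union--find structure, for $O(m\,\alpha(n)\log n)$ in total --- the first term of the bound --- while the $O(nr)$ residual-arc changes triggered by a re-augmentation are applied locally and absorbed into that re-augmentation's $\tO(nr)$ budget.

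The step I expect to be the main obstacle is proving the generalized sparsification theorem and, inside it, the passage from cut preservation to flow routability. One must show that the Fung et al. argument --- a weight-controlled union bound over cuts together with concentration --- survives the move to balanced directed graphs, with the Eulerian-balance condition being exactly what keeps the directed cuts of each size no more numerous than in the undirected case and hence gives the $\tO(n/\eps^2)$ arc bound; and one must show that preserving all directed $s$-$t$ cuts of $G_f$ in $H$ lets one actually route $\ge r/2$ units of integral augmenting flow back into $G_f$, rather than merely certify the value of its minimum cut. Further points requiring care are keeping the sparsifier a good-enough approximation of $G_f$ between pushes within a round (or re-sparsifying cheaply), handling in the connectivity certificates the deletions of residual arcs that re-augmentation induces, and verifying that the lazy-rebuild schedule leaves $f$ a valid $(1-\eps)$-approximation at every intermediate step and not only right after a rebuild.
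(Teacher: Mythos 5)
Your overall architecture (Karger--Levine residual sparsification, Nagamochi--Ibaraki certificates maintained by union--find, and separate charging of the $\tO(m)$ insertion cost and the $\tO(nF^*/\eps)$ augmentation cost) aligns with the paper, but the concrete re-augmentation mechanism you propose is genuinely different and has a real gap. The paper never computes a maximum flow in the sparsifier and never pulls back a bulk augmenting flow of value $\Theta(r)$: it augments one unit at a time by finding a single $s\to t$ path in the sampled graph $H$, immediately updates $G_f$, and \emph{resamples} $H$ from scratch after every unit augmentation. Since $H\subseteq G_f$, any $s\to t$ path in $H$ is automatically a valid unit augmenting path in $G_f$; the only thing the sparsification theorem is asked to certify is that, when the current flow is not $(1-\eps)$-approximate, the directed $s$-$t$ mincut of $H$ is positive, so $H$ contains at least one $s\to t$ path. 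The step you flag as your main obstacle --- showing that preserving all directed cuts of $G_f$ lets you route $\ge r/2$ integral units from a \emph{weighted} sparsifier back into the unweighted $G_f$ --- is indeed where your version breaks: a flow feasible in the reweighted $H$ need not decompose into paths feasible in $G_f$, and you would in any case have to re-sparsify after each integral push because $G_f$ changes, at which point your scheme collapses to the paper's per-unit resampling but with a worse $\eps$-dependence (see next point). Also absent from your outline is the mechanism the paper uses to decide \emph{when} to augment: it runs an incremental single-source reachability structure on $H$, adds each newly inserted edge to $H$ directly (an oversample) and queries reachability after each insertion; the lazy counting schedule is not needed.

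Your ``Eulerian balance'' claim is a red herring: it is true that residual in- and out-degree agree at every $v\notin\{s,t\}$, so non-$s$-$t$ cuts have balance $1$, but the cuts that govern routability are the $s$-$t$ cuts, whose balance $(c-F)/(c+F)$ is only $\Omega(\eps)$ when $F\le(1-\eps)F^*$ (the paper's Lemma~\ref{lem:residualcuts}). This is precisely why the paper fixes the accuracy parameter at a constant ($\eps'=1/2$) and pushes the $\eps$-dependence into the balance parameter $\beta=O(1/\eps)$, giving a sparsifier of size $O(n\log^2 n/\eps)$ rather than your $\tO(n/\eps^2)$. If you want to preserve the cleaner lazy-schedule accounting you describe, you would still need to use the $\eps/2$-balance lemma (not Eulerian balance) and then argue per-unit augmentation with resampling, at which point you essentially recover the paper's phase structure with $K\le F^*$ phases rather than $O(\eps^{-1}\log n)$ re-augmentations.
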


Note that in addition to dense graphs ($m=\Omega(n^2)$), our algorithm achieves polylogarithmic amortized update time even for graphs with small flow value $F^* = \tO(m/n)$, regardless of their density. %
An important feature of our algorithm is that it builds upon arguably simple and classic combinatorial techniques, such as residual graph sparsification and cut sparsification. Our algorithm can also be extended to allow non-simple graphs, which admit parallel edges between the same vertices.

\begin{table}[t]
\setlength{\tabcolsep}{5.2pt}
\small
\centering
\begin{tabular}{lcccccl}
\toprule
\textbf{Setting} & \textbf{Apx. Factor} & \textbf{Flow/Value} & \textbf{Directed} & \textbf{Weighted} & \textbf{Update Time} & \textbf{Reference} \\
\midrule
\multirow{7}{*}{\textbf{Incremental}}
& 1 & Flow & Yes & Yes & $O(F^*)$ & \cite{Hen97vertexconnectivity, GK21misflowmatch} \\
& 1 & Flow & No & No & $\tO(n^{2.5}m^{-1})$ & \cite{GH23incflow} \\
& $1+\eps$ & Flow & Yes & No & $\hO(m^{0.5}\eps^{-0.5})$ & \cite{GH23incflow} \\
& $1+\eps$ & Flow & Yes & Yes & $\hO(n^{0.5}\eps^{-1})$ & \cite{vdBLS23incmaxflow} \\
& $1+\eps$ & Flow & No & Yes & $O(m^{o(1)}\eps^{-3})$ & \cite{vdBCK24incflow} \\
& $1+\eps$ & Flow & Yes & Yes & $O(m^{o(1)}\eps^{-1})$ & \cite{CKL24incremental} \\
& $1+\eps$ & Flow & No & No & $\tO(nF^* m^{-1} \eps^{-1})$ & \cref{thm:inc_apx_flow} \\
\cmidrule{1-7}
\multirow{1}{*}{\textbf{Decremental}}
& $1+\epsilon$ & Value & Yes & Yes & $O(m^{o(1)}\epsilon^{-1})$ & \cite{vdBCK24decremental} \\
\cmidrule{1-7}
\multirow{3}{*}{\textbf{Fully dynamic}}
& $\tO(\log n)$ & Value & No & Yes & $\hO(n^{0.667})$ & \cite{CGH20dynjtree} \\
& $m^{o(1)}$ & Value & No & No & $m^{o(1)}$ & \cite{GRST21xhierarchy} \\
& $m^{o(1)}$ & Value & No & Yes & $m^{o(1)}$ & \cite{vdBCK24decremental} \\
\bottomrule
\end{tabular}
\caption{Results on dynamic max flow. The ``Flow/Value'' column indicates whether the algorithm maintains an actual flow or just the value of a flow. The update time stated is the amortized time over all updates.}
\label{tab:dynamic_max_flow}
\end{table}
\subsection{Technical Overview}
\label{sec:tech_overview}

The main idea underlying our result is to dynamize the static algorithm by Karger and Levine~\cite{KL15sampling} (abbrv. KL algorithm) that computes an \emph{exact} max flow in $\tO(m + nF^*)$ time. We start by reviewing their static construction and then identify the challenges in the dynamic setting, along with our approach to overcome them. We present the algorithm and its analysis in full in \cref{sec:inc_flow}.

The KL algorithm proceeds by repeatedly sampling $\rho = O(n \log n)$ edges from the residual graph, searching for an augmenting path in this sample, and then augmenting along this path before resampling again. When the search fails to find an augmenting path, the number of sampled edges is doubled to $2\rho$ and this process is repeated until the sampled graph becomes as large as the original graph.
Denoting the value of the maximum flow by $F^*$, they show that at least $F^*/2$ of the augmentations are performed on the sparsest graph with $\rho$ edges, at least $F^*/4$ of the augmentations are performed on the graph with $2\rho$ edges and so on.
After spending $\tO(m)$ on pre-processing to compute the sampling probabilities, the total running time thus adds up to $ \approx \sum_i (F^* /2^i) \cdot 2^{i-1} \rho$, which gives the required $\tO(m + nF^*)$ bound. Crucially, the last augmentation from $F^*-1$ to $F^*$ is performed on a graph of size $\Omega(m)$.

This final augmentation, which requires a sample of size $\Omega(m)$, effectively invalidates any attempts to use the KL approach for obtaining an incremental \emph{exact} max flow algorithm. To understand why, observe that after every edge insertion, the algorithm needs to check if this edge insertion creates a new $s\to t$ augmenting path in the sampled graph. As we discuss below, this can be done in two ways, but both lead to algorithmic dead ends.

We could try to use an incremental single source reachability data structure (e.g., the one by Italiano~\cite{Ita86incdirected}) to detect augmenting paths in the sampled graph, but since the actual augmentation reverses the direction of edges, we would need to reinitialize the incremental data structure on the sample after each augmentation. Since $F^*$ augmentations are performed on samples of size $\Omega(m)$, this leads to a time bound of $\Omega(mF^*)$.
As a concrete example, consider an initially empty bipartite graph $G = (S \cup T, \emptyset)$, with $s \in S$ and $t \in T$. In the first phase, insert all edges between $S \times T \setminus \{ t \}$, then in the second phase, insert all the remaining edges of the type $\{(v, t)  \}_{v \in S}$. Each edge insertion in the second phase increases the max flow by $1$, and any algorithm based on a KL-type approach with an incremental reachability data structure needs to check an $\Omega(m)$-edge graph to perform each augmentation.

The second approach would be to use a \emph{fully dynamic} directed $s\to t$ reachability data structure. This would solve the above problem, as  an augmentation in the residual graph can be simulated using $O(n)$ directed edge deletions and insertions. However, fully dynamic $s \to t$ reachability is known to admit strong conditional lower bounds -- for any $\eta > 0$, no algorithm can achieve $O(n^{1-\eta})$ worst-case update time and $O(n^{2-\eta})$ worst-case query time, assuming the OMv conjecture~\cite{HKNS15omv}.

Interestingly, we show that the first approach can be extended to the incremental setting if we relax our algorithm to maintain a \emph{$(1-\eps)$-approximate} $s$-$t$ max flow instead of an exact one.
The high-level reason why is as follows: To show that at least $F^*/2^i$ of the augmentations are found in the sampled graph of size $2^{i-1} \rho$, Karger and Levine prove that when at least $F^*/2^i$ of the max flow still remains to be augmented, a sampled graph of size $2^{i-1}\rho $ suffices to preserve the existence of an augmenting path with high probability\footnote{For us, high probability refers to the failure probability being $\le 4/n$.}. For our setting, if we want to maintain a $(1-\eps)$-approximate max flow, the above claim with $i = \log (1/\eps)$ tells us that it suffices to maintain a sparsifier with just $O(n \log (n) / \eps)$ edges incrementally.
This then removes the need to augment on an $\Omega(m)$-edge graph which lets us circumvent the above barrier.
While incrementally maintaining the sparsifier used by Karger and Levine would be highly non-trivial, we show that we can both efficiently maintain a different sparsifier incrementally, and that the new sparsifier is powerful enough to recover the guarantees required for the KL approach to work (at the cost of a $\log n$ factor in the sparsifier size).

The challenge of incrementally maintaining the KL sparsifier is due to its fragility under edge insertions. The KL sampling depends on a parameter called the \emph{strong connectivity} of an edge~\cite{BK15sampling}, and in the incremental setting, a single edge insertion could modify the strong connectivities (and thus the sampling probabilities) of \emph{all} other edges of the graph. As a result, no existing algorithm can even just maintain the sampling probabilities, let alone maintain an actual sample from the corresponding distribution. The same issue arises for other well-known connectivity measures used for sampling such as edge-connectivity~\cite{FHHP19sparsification} and effective resistances~\cite{SS11sparse}, and even maintaining approximations to them is quite involved~\cite{DurfeeGGP18, DGGP19schur, CGH20dynjtree, GaoLP21}. Instead, our key observation is to use \emph{Nagamochi-Ibaraki (NI) indices}~\cite{NI92sparsify} for our sampling because of their resilience to incremental updates.

While the other connectivity parameters mentioned are unique for an edge, the NI index is determined by the forest packing constructed, and different packings gives rise to different indices. This flexibility in choosing the index allows us to maintain the same value even under edge updates, which we crucially use to maintain the sampling probabilities. The resilience property we use is as follows: when inserting an edge, the NI index of all other edges remains the same, and our task is to only determine the NI index of the newly inserted edge efficiently (which then does not change in the future). Efficient maintenance of NI indices is discussed in \cref{sec:inc_ni}.

While we now maintain a set of sampling probabilities using NI indices, it is unclear how they can be used to replace strong connectivity in the KL algorithm. We extend the general result of Fung et al.~\cite{FHHP19sparsification} on using a wide variety of connectivity parameters for cut sparsification on \emph{undirected} graphs to showing that the same parameters also lead to cut sparsification on balanced directed graphs\footnote{In the static setting, Cen et al.~\cite{CCPS21cutbalance} have shown that sampling according to strong connectivity leads to cut sparsification of balanced directed graphs. Since we work in the dynamic setting, we require the guarantee to hold for a larger class of connectivity parameters, including NI indices.}, which cover residual graphs that arise in our algorithm as a special case. Their results sample each edge independently, which in our setting leads to a prohibitive $\Omega(m)$ time for sampling. We adapt this to repeated sampling from a probability distribution, where we only spend $O(\log n)$ time per edge for a total $O(n \log^2(n)/\eps)$ edges sampled. We present this result in full detail in \cref{sec:balanced_sparsification}.

Finally, we simplify our algorithm by not maintaining an exact sample from the distribution at each time step, but an oversample. Specifically, we obtain a sample of the correct size ($O(n \log^2 (n)/\eps)$) right after an augmentation, but between two augmentations, we add edges directly to the sample instead of trying to maintain the distribution. This could blow up the number of edges in the sample to $\Omega(m)$ at some time steps, but since each edge is added this way to at most one sample throughout the algorithm, the total time bound of $\tO(m + nF^*/\eps)$ still holds. This also helps us obtain our guarantee against adversaries that know the current flow maintained by the algorithm, since the sample remains independent of the flow: any time the flow is updated, we resample the graph based on NI indices (which are independent of the flow), and between two augmentations, all inserted edges are directly added to the sample.

\section{Preliminaries}
\label{sec:prelim}

\subsection{Graphs and flows}
\label{sec:graphsflows}

\paragraph*{Flows}
Our algorithmic results are on undirected uncapacitated graphs $G=(V,E)$. We denote $n = |V|$ and $m = |E|$.
For two vertices $s, t \in V$, an $s$-$t$ flow $f \in \R^m$ assigns a value $f_e$ to each edge such that $\sum_{e \sim v} f_e = 0$ for all $v \neq s, t$ with $|f_e| \le 1$ $\forall e$. The value of a flow $f$ is $F(f) = \sum_{e \sim s} f_e$, and the maximum flow $f^* = \argmax_{\text{$s$-$t$ flows $f$}} F(f)$, with value $F^* = F(f^*)$. We use $F$ without the argument $f$ when the flow is clear from context. For any $\alpha \le 1$, an $s$-$t$ flow $f$ is an $\alpha$-approximate flow if $F \ge \alpha \cdot F^*$. If $\alpha > 1$, then the statement holds with $\alpha' = 1/\alpha$.

\paragraph*{Model}
In the incremental model, we start with an empty graph with no edges, and the graph is revealed as a sequence of edge insertions $e_1, e_2, \ldots$, leading to graphs $G_1, G_2, \ldots$. Our goal is to maintain, after each edge insertion $e_i$, an $s$-$t$ flow $f$ that is a $(1-\eps)$ approximation to the max $s$-$t$ flow $f^*_i$ in the current graph $G_i$. We use $F_i^*$ to denote the value of the flow $f_i^*$, and we use $F^*$ to denote the final max $s$-$t$ flow value after all edge insertions.

\paragraph*{Directed graphs}
We denote by $\vec{G} = (V, \vec{E})$ a directed (multi-)graph without edge capacities. An edge of integer capacity $x \in \mathbb Z^{+}$ is represented using $x$ parallel edges. For any subset of vertices $S \subseteq V$, $\bar{S} = V \setminus S$ is the complement of $S$, and $\partial_{\vec{G}}(S)$ denotes the set of edges leaving $S$, which is also denoted by $\vec{C}(S) = \partial_{\vec{G}}(S)$. Similarly, we use $\cev{C}(S) = \partial_{\vec{G}}(\bar{S})$ to denote the set of edges entering $S$. We use $\vec{C}$ and $\cev{C}$ without the argument $S$ when the subset is clear from context. The capacity $u(\vec{C}) = |\vec{C}|$ of the cut $\vec{C}$ is the number of edges in $\vec{C}$.

\paragraph*{Underlying undirected graph}
For any directed (multi-)graph $\vec{G}$, we denote by $G$ the underlying undirected (multi-)graph obtained from $\vec{G}$ by forgetting edge orientations, and $\partial_G(S)$ is the set of edges leaving $S$, which is also denoted by $C(S) = \partial_G(S)$.
For a directed cut $\vec{C} = \partial_{\vec{G}}(S)$, we use $C = \partial_G(S)$ to denote the underlying undirected cut in $G$, and $u(C) = |C|$ to denote the number of edges in $C$.

\paragraph*{Residual graph}
For any undirected, uncapacitated graph $G$, the corresponding residual graph for the zero flow $f=0^m$ is given by $G_{0^m}$ where each edge $(u,v)$ in $G$ is replaced by two directed edges $\vec{uv}$ and $\vec{vu}$. For any $G$ and a non-zero flow $f$, the residual graph $G_f$ is constructed the following way: start with the graph $G_{0^m}$ as above, and for every edge $e$ that carries flow in the direction $u \to v$, replace the edge $\vec{uv}$ in $G_{0^m}$ with the edge $\vec{vu}$.
In that case there are two edges $\vec{vu}$ in $G_f$.
A flow $f^*$ is a maximum flow if and only if $G_{f^*}$ contains no $s \to t$ paths.

\subsection{Sampling parameters}
\label{sec:samplingparameters}

We define our importance parameter for sampling edges next.
Since the residual graph has multiple edges between the same vertex pairs, we consider multi-graphs (which allow parallel edges) for the following definition.

\fancydef[niindices][NI indices]{niindex}{NI index}
\begin{definition}[NI index~\cite{NI92sparsify}]
\label{def:niindex}
Given an undirected, unweighted (multi-)graph $G$, an ordered sequence of edge-disjoint spanning forests $T_1, T_2, \ldots$ of is said to be an NI forest packing of $G$ if each $T_i$ is a maximal spanning forest on $G \setminus \cup_{1 \le j < i} \{ T_j \}$, and $\cup_i \{  T_i \}$ is a partition of all the edges of $G$. The NI index $\ell_e$ of an edge $e$ is the index $i$ of the forest $T_i$ that $e$ belongs to.
\end{definition}

While the \niindices are the connectivity parameters that we will use for our sampling scheme due to their resilience to incremental graph updates, the following connectivity parameter is required for our directed sparsification proofs and cut-counting.

\fancydef{edgeconnectivity}{edge-connectivity}
\begin{definition}[Edge-connectivity]
\label{def:edgeconnectivity}
Given an undirected unweighted (multi-)graph $G$ and two vertices $u$ and $v$, the edge connectivity between $u$ and $v$ is defined to be the value of the minimum cut (and thus also the maximum flow) between $u$ and $v$ in $G$.
For an edge $e = (u,v)$, we define the \edgeconnectivity of $e$ to be the edge connectivity between $u$ and $v$.
\end{definition}

As we perform \emph{non-uniform} sampling of the edges, we maintain connectivity parameters $\lambda_e$ that influence the sampling probabilities.
The \niindex of an edge and the \edgeconnectivity of an edge are examples of such connectivity parameters.
We will show concentration bounds for classes of edges whose $\lambda_e$s are close to each other, which motivates the following definition.

\fancydef[connectivityclasses][connectivity classes]{connectivityclass}{connectivity class}
\begin{definition}[Connectivity class]
\label{def:connectivityclass}
Given an undirected graph $G$ and connectivity parameters $\lambda_e$ for each edge, taking $\Lambda = \max_e \left\lceil \log \lambda_e  \right\rceil + 1$,
the connectivity classes $\mathcal F = \{ F_i: 1 \le i \le \Lambda \}$ are given by
\[
    F_i = \{ e : 2^{i-1} \le \lambda_e \le 2^i-1 \}.
\]
\end{definition}

For any \connectivityclass $F_i$, the graph $(V, F_i)$ does not have the same \edgeconnectivity properties that $G = (V, E)$ did. An edge $e \in F_i$ could have \edgeconnectivity $\Omega(n)$ in $G$ and $1$ in $(V, F_i)$, e.g., a graph with a single $(s,t)$ edge and $\Omega(n)$ parallel paths from $s$ to $t$ of length two. The following definitions from \cite{FHHP19sparsification} are used to construct a subgraph of $G$ that is slightly larger than $F_i$ that certify connectivity properties of edges in $F_i$.

\fancydef{decomposition}{decomposition}
\fancydef{piconnectivity}{$\Pi$-connectivity}
\begin{definition}[$\Pi$-connected decomposition~\cite{FHHP19sparsification}]
\label{def:decomposition}
Given an undirected graph $G$ and connectivity parameters $\lambda_e$, a decomposition $\mathcal G = \left( G_i = (V, E_i) \right)_{1 \le i \le \Lambda}$ is a sequence of subgraphs of $G$ that satisfy $F_i \subseteq E_i$ for all $i$ with $\Lambda = \max_e \left\lceil \log \lambda_e \right\rceil + 1$.
\label{def:piconnectivity}
Further, for a sequence of parameters $\Pi = \left( \pi_i \right)_{1 \le i \le \Lambda}$,
the decomposition satisfies $\Pi$-connectivity if
all edges $e \in F_i$ have edge-connectivity at least $\pi_i$ in $G_i$ for all $i$.
\end{definition}

Note that the same edge is allowed to appear in multiple sets $E_i$, and the decomposition is \emph{not} edge disjoint. This property is crucial in our proofs.

\fancydef{gammaoverlap}{$\gamma$-overlap}
\begin{definition}[$\gamma$-overlap~\cite{FHHP19sparsification}]
\label{def:gammaoverlap}
For any $\gamma \ge 1$,
an undirected graph $G$ and its $\Pi$-connected \decomposition $\mathcal G$ satisfies $\gamma$-overlap
if for all $i$ and for all cuts $C = \partial_{G}(S)$,
\[
\sum_{i = 0}^\Lambda u(C \cap E_i) \cdot \frac{2^{i-1}}{\pi_i} \le \gamma \cdot u(C)
\]
\end{definition}

\subsection{Balance and cut counting}
\label{sec:balancecutcounting}

\paragraph*{Imbalance of directed graphs}
Preserving various graph properties by subsampling is difficult on general directed graphs. However, sampling algorithms can be successfully applied when the directed graph exhibits some nice structure, for example, they have a similar number of edges crossing a cut in both directions. Eulerian graphs have exactly the same number of edges crossing any cut in both directions, and more pertinently, the residual graph of a flow that is not a $(1-\eps)$ approximate max flow has at least an $O(\eps)$ fraction of edges crossing any cut in one direction over the other.

\fancydef{balance}{balance}
\begin{definition}[balance~\cite{EMPS16balance}]
\label{def:balance}
Given a directed (multi-)graph $\vec{G} = (V, \vec{E})$, the (im)balance of a cut $\vec{C} = \partial_{\vec{G}}(S)$ is defined as
\[
\beta(\vec{C})
= \frac{u(\vec{C})}{u(\cev{C})}
\] where $\cev{C} = \partial_{\vec{G}}(\bar{S})$ is the set of edges in the other direction of the cut.
\end{definition}

\paragraph*{Counting cut projections}
Karger shows that there are at most $n^{2\alpha}$ distinct cuts whose size is $\le \alpha \lambda$,
where $\lambda$ is the global mincut. We, like Fung et al.~\cite{FHHP19sparsification}, look at larger cuts in the graph and require a tighter bound than the one given by Karger, which requires the following definition.

\fancydef{kprojection}{$k$-projections}
\begin{definition}[directed $k$-projection]
\label{def:kprojection}
Let $\vec{G} = (V, \vec{E})$ be a directed (multi-)graph and let $G$ be the corresponding undirected graph. The $k$-projection of any cut $\vec{C} = \partial(S)$ in $\vec{G}$ is $\vec{C} \cap H_k$ where
\[
H_k = \left\{ \vec e \in \vec E: \text{the \edgeconnectivity of $e$ in $G$ is at least $k$} \right\}
\]
\end{definition}

We use this definition to extend the following statement to directed graphs, which was originally shown by Fung et al.~\cite{FHHP19sparsification} for undirected graphs.

\begin{restatable}[\cite{FHHP19sparsification}]{lemma}{fhhpprojection}
\label{lem:fhhpprojection}
Let $G = (V, E)$ be an undirected (multi-)graph. Let $k \ge \lambda$ be any real number, where $\lambda$ is the value of the global minimum cut in $G$. Consider all the cuts $C = \partial_G(S)$ in $G$ such that $u(C) \le \alpha k$, and let $\mathcal K^{\downarrow k}_{\alpha}$ be the set of all undirected $k$-projections of these cuts, where the undirected $k$-projection of a cut $C = \partial_G(S)$ is $C \cap H_k$ and $H_k$ is the set of edges with edge-connectivity at least $k$. Then $|\mathcal K^{\downarrow k}_\alpha| \le n^{2\alpha}$.
\end{restatable}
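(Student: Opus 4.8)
The plan is to adapt Karger's random-contraction cut-counting argument to count projections rather than cuts. Recall the template: run the contraction algorithm on $G$ down to two vertices; for a graph with global minimum cut $\lambda$, any fixed cut of value at most $\alpha\lambda$ survives (no edge of it is ever contracted) with probability at least $\binom{n}{\lfloor 2\alpha\rfloor}^{-1}\ge n^{-2\alpha}$, and since the survival events of distinct cuts are disjoint, at most $n^{2\alpha}$ cuts have value at most $\alpha\lambda$. The obstruction here is that the hypothesis only gives $u(C)\le\alpha k$ with $k$ possibly far larger than $\lambda$, so the quantity $u(C)/\lambda$ that governs the naive survival bound can be enormous; the point of the proof must be that we only have to ``pay'' for the edges of the projection $C\cap H_k$, not for all of $C$.

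First I would record two structural facts. By the standard fact that edge-connectivity satisfies $\lambda_G(u,w)\ge\min\{\lambda_G(u,v),\lambda_G(v,w)\}$, the relation ``$\lambda_G(\cdot,\cdot)\ge k$'' is an equivalence relation on $V$; writing $V=V_1\sqcup\cdots\sqcup V_p$ for its classes, $H_k$ is exactly the set of edges internal to the $V_i$'s, so the $k$-projection of $C=\partial_G(S)$ decomposes as $\bigsqcup_i\partial_{G[V_i]}(S\cap V_i)$. Second, contraction never decreases the connectivity between two surviving super-vertices, so at every stage each surviving edge of $C\cap H_k$ joins a pair of super-vertices of connectivity at least $k$; in particular, whenever every super-vertex is incident to a surviving edge of $C\cap H_k$, each super-vertex has degree at least $k$, so the current graph on $r$ vertices has at least $rk/2$ edges and a uniformly random edge lies in $C\cap H_k$ with probability at most $2u(C\cap H_k)/(rk)\le 2\alpha/r$.

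The argument then runs a modified contraction. While the support of the surviving projection edges is strictly smaller than the current vertex count, contract a uniformly random edge incident to a vertex outside that support: such an edge is never a projection edge (a projection edge has both endpoints in the support), so this step is ``free''. Once the vertex count equals the support size, contract a uniformly random edge, which by the previous paragraph fails with probability at most $2\alpha/r$. The second type of step strictly decreases the vertex count, so those steps occur at distinct values $r$ drawn from $\{3,\dots,n\}$; multiplying $\prod_r(1-2\alpha/r)$ reproduces Karger's product and gives survival probability at least $n^{-2\alpha}$. Grouping the cuts of value at most $\alpha k$ by their $k$-projection, choosing a representative from each group, and using disjointness of the events ``the final two-vertex cut has $H_k$-part equal to $R$'' then bounds $|\mathcal K^{\downarrow k}_\alpha|$ by $n^{2\alpha}$.

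The delicate point — the one I expect to be the main obstacle — is that final sentence: it is not quite enough that no projection edge is \emph{contracted}, since a projection edge can also disappear as a self-loop when its endpoints are merged along some other path, and, more importantly, the final two-vertex cut must have $H_k$-part \emph{exactly} $R$, which couples the behaviour of all the classes $V_i$ at once. Making this rigorous requires either (i) running the contraction so that within every class the $S$-side and the $\bar S$-side are collapsed before anything else is, forcing the final partition to agree with $C$ on each $V_i$, or (ii) decomposing into one counting problem per class and splitting the budget via $\sum_i u(\partial_{G[V_i]}(S\cap V_i))\le u(C)\le\alpha k$, applying a Karger bound inside the auxiliary graph obtained from $G$ by contracting $V\setminus V_i$ (whose cuts that actually separate two vertices of $V_i$ have value at least $k$, the others giving the empty projection). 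Since this statement is precisely the projection-counting lemma of Fung et al., one may alternatively just invoke \cite{FHHP19sparsification} directly.
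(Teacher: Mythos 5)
This lemma is not proved in the paper at all: it is stated with the attribution to \cite{FHHP19sparsification} and imported as a black box (its only role is to feed the factor-of-two argument in the directed version, \cref{lem:directed_cut_counting}). So your closing remark --- ``one may alternatively just invoke \cite{FHHP19sparsification} directly'' --- is exactly what the paper does, and is the intended answer here.

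Since you spent most of the proposal on a from-scratch contraction argument, I should flag that it has a gap beyond the one you identify. Your modified contraction process is defined in terms of the particular cut $C$ under analysis: the ``free'' steps contract edges incident to vertices outside the support of the surviving edges of $C\cap H_k$. That means each cut is analyzed under a \emph{different} random process, and the final step --- ``the events are disjoint, hence their probabilities sum to at most $1$, hence there are at most $n^{2\alpha}$ projections'' --- is unavailable: disjointness only buys you anything when all events live in one probability space generated by one cut-independent algorithm. The standard repair is to make the free steps depend only on the current graph, e.g.\ contract edges incident to super-vertices of degree less than $k$; such a vertex is incident to \emph{no} edge of $H_k$ whatsoever (degree bounds connectivity), so these steps are simultaneously free for every cut, and when no such vertex remains the $rk/2$ edge count you derived applies. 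Separately, the difficulty you do flag is real and neither of your fixes is worked out: once a free step contracts an edge of $C\setminus H_k$, super-vertices become mixed with respect to $(S,\bar S)$, after which (i) a random contraction of a non-projection edge running parallel to a surviving projection edge turns the latter into a self-loop --- so even the per-step failure probability is not bounded by $u(C\cap H_k)$ over the edge count --- and (ii) the final two-vertex cut can pick up $H_k$-edges outside $C\cap H_k$, so ``$R$ survives'' does not identify the output's $k$-projection with $R$. Handling exactly these points is the content of the projection-counting theorem of Fung et al.; for the purposes of this paper, citing it is both sufficient and what is done.
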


\subsection{Data Structures}
\label{sec:datastructures}

We also need three classic data structures for our results, which we produce here.

\setlist{nosep,topsep= 0.3em-\parskip}

\fancydef{unionfind}{\textsc{UnionFind}}
\begin{fact}[\textsc{UnionFind}~\cite{TvL84unionfind}]
\label{def:unionfind}
There is a data structure that supports the following operations
\begin{itemize}
    \item $\fn{Add}(u)$: Adds $\{u\}$ to the set of sets in $O(1)$ time.
    \item $\fn{Find}(u)$: Finds the representative of the set that $u$ belongs to in amortized $O(\alpha(n))$ time.
    \item $\fn{Union}(u, v)$: Replaces the sets that $u$ and $v$ belong to with their union in $O(1)$ time.
\end{itemize}
where $n$ is the number of elements present at that time, and $\alpha(n)$ is the inverse Ackermann function.
\end{fact}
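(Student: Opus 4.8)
\textbf{Proof proposal for the \textsc{UnionFind} fact (\cref{def:unionfind}).}

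The plan is to recall the standard union-find implementation using a disjoint-set forest with the two classical heuristics, \emph{union by rank} and \emph{path compression}, and then cite (and briefly sketch) the Tarjan--van Leeuwen amortized analysis that yields the $O(\alpha(n))$ bound. First I would describe the representation: each element $u$ is a node storing a parent pointer $p(u)$ and an integer rank $r(u)$; a set is identified with the root of its tree (the unique node with $p(u)=u$). The operation $\fn{Add}(u)$ creates a new singleton node with $p(u) = u$ and $r(u) = 0$, which is clearly $O(1)$ worst-case. The operation $\fn{Find}(u)$ walks parent pointers from $u$ to the root $\hat u$, and then performs \emph{path compression}: it makes every node on the traversed path point directly to $\hat u$; it returns $\hat u$. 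The operation $\fn{Union}(u,v)$ first locates the two roots $\hat u = \fn{Find}(u)$ and $\hat v = \fn{Find}(v)$ (one may instead assume $u,v$ are already roots, charging the finds separately, which matches the stated $O(1)$ bound for \fn{Union} itself); if $\hat u = \hat v$ it does nothing, otherwise it attaches the root of smaller rank under the root of larger rank, breaking ties arbitrarily and incrementing the surviving root's rank by one in the tie case. This is $O(1)$ worst-case given the roots.

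Next I would establish the structural invariants that drive the analysis: (i) ranks along any root-to-leaf path are strictly increasing, so a node of rank $k$ has a subtree of at least $2^k$ nodes in the \emph{union-by-rank-only} forest, hence ranks never exceed $\lfloor \log_2 n\rfloor$; (ii) once a node ceases to be a root its rank is frozen forever, while its subtree can only shrink (path compression moves descendants up); (iii) the number of nodes of rank exactly $k$ is at most $n/2^k$. Path compression does not change ranks, so these bounds persist. The amortized analysis then proceeds by the classical potential/accounting argument: partition the non-root nodes by ``rank blocks'' defined via the iterated-logarithm-style tower underlying $\alpha$, and for each \fn{Find} split the cost of the root-path traversal into (a) the last node in each rank block, contributing $O(\alpha(n))$ steps total since there are $O(\alpha(n))$ blocks among ranks $\le \log n$, and (b) the remaining nodes, each of which is charged to itself; a node in block $b$ can be charged at most (block size) times before path compression moves it to a strictly higher block, and summing (block size) $\times$ (number of nodes that ever land in block $b$) over all blocks telescopes to $O(n\,\alpha(n))$. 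Dividing by the number of operations gives amortized $O(\alpha(n))$ per \fn{Find}, with \fn{Add} and \fn{Union} (on roots) at $O(1)$ worst-case, matching the statement; here $n$ is the current number of elements, and since elements are only added, all the rank bounds hold with the current $n$.

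The main obstacle — and the only genuinely nontrivial part — is the tight $O(\alpha(n))$ amortized bound for \fn{Find} under union by rank with path compression; the $O(1)$ bounds for \fn{Add} and \fn{Union} are immediate, and a weaker $O(\log n)$ bound for \fn{Find} follows trivially from invariant (i). Since this is a well-known classical result, my intention would be to give the implementation and invariants in full and then invoke the Tarjan--van Leeuwen analysis~\cite{TvL84unionfind} for the amortized bound rather than reproducing the (lengthy) block-charging computation; if a self-contained proof is desired, one carries out the rank-block argument sketched above, the delicate point being the choice of block boundaries so that the number of blocks is $O(\alpha(n))$ while each node is recharged only $O(\text{blocksize})$ times before being promoted to a higher block.
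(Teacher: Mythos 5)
Your proposal is correct: the paper states this as a cited \textbf{Fact} with no proof of its own, deferring entirely to Tarjan and van Leeuwen, and your plan --- describe the disjoint-set forest with union by rank and path compression, verify the $O(1)$ bounds for $\fn{Add}$ and $\fn{Union}$ on roots, and invoke the classical rank-block amortized analysis for the $O(\alpha(n))$ bound on $\fn{Find}$ --- is exactly the standard justification the citation points to. No gap; your accounting convention (charging the finds inside $\fn{Union}$ separately so that $\fn{Union}$ itself is $O(1)$) matches the way the fact is stated.
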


\fancydef{SSR}{\textsc{SingleSourceReachability}}
\begin{fact}[\textsc{SingleSourceReachability}~\cite{Ita86incdirected}]
\label{def:SSR}
There is a data structure that supports the following operations on a graph $\vec{G}$ with $n$ vertices
\begin{itemize}
    \item $\fn{Initialize}(\vec{G}, s)$: Initializes a data structure on $\vec{G}$ with special vertex $s$ in $O(m+n)$ time.
    \item $\fn{Insert}(e = \vec{uv})$: Inserts the directed edge $\{e = \vec{uv}\}$ into $\vec{G}$ in amortized $O(1)$ time.
    \item $\fn{Reachable}(t)$: Returns ``True'' if $t$ is reachable from $s$ in $\vec{G}$ in amortized $O(1)$ time.
    \item $\fn{GetPath}(t)$: Returns a directed $s \to t$ path in $\vec{G}$ in time proportional to the length of the path.
\end{itemize}
\end{fact}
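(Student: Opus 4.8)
The plan is to dynamize the static Karger--Levine (KL) algorithm with three changes: (i) cap the residual-graph sample at size $O(n\log^2(n)/\eps)$, which is exactly the size at which the KL analysis still guarantees an $s$-$t$ augmenting path while the current flow value is below $(1-\eps)F^*$ (the extra $\log n$ over the KL sparsifier is the price of a weaker, incrementally-friendly sparsifier); (ii) build the sample from \niindices (\cref{def:niindex}) rather than strong connectivities, so the sampling probabilities can be maintained incrementally; and (iii) detect augmenting paths with an incremental \SSR structure. Concretely, the data structure stores an $s$-$t$ flow $f$ (initially $0$), the \niindices $\ell_e$ of all present edges (maintained by the procedure of \cref{sec:inc_ni} in amortized $O(\log(n)\alpha(n))$ time per insertion, exploiting that inserting an edge leaves every other index unchanged), a multiset $S$ of residual arcs of $G_f$, and a \SSR structure on $S$ rooted at $s$. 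On inserting $e_i=(u,v)$: compute $\ell_{e_i}$; append both residual arcs $\vec{uv},\vec{vu}$ of $e_i$ (since $f_{e_i}=0$) to $S$ and insert them into the reachability structure; then, while $\fn{Reachable}(t)$ is true, call $\fn{GetPath}(t)$, augment $f$ by one unit along the returned path, discard $S$, draw a fresh multiset $S$ of $O(n\log^2(n)/\eps)$ residual arcs of the updated $G_f$ from the \niindex-based distribution, and re-initialize the reachability structure on the new $S$. Queries return the current $f$, which (space permitting) is kept explicitly.

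\textbf{Correctness.} I must show that whenever the inner loop exits at insertion $i$ (so $t$ is unreachable from $s$ in the current $S$), the flow $f$ is a $(1-\eps)$-approximation of $F^*_i$. Assume not, i.e.\ $F(f)<(1-\eps)F^*_i$; first I argue that $G_f$ (w.r.t.\ $G_i$) is $(\eps/2)$-\balance. Fix a cut $\partial(S')$ with $s\in S'$, $t\notin S'$, and let $a,b,c$ count its edges carrying flow out of, into, and not across $S'$; flow conservation gives $a-b=F(f)$, and since $\partial_{G_i}(S')$ is an $s$-$t$ cut, $a+b+c=u(\partial_{G_i}(S'))\ge F^*_i$. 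The forward and reverse residual cuts then have sizes $2b+c=u(\partial_{G_i}(S'))-F(f)$ and $2a+c=u(\partial_{G_i}(S'))+F(f)$, so the \balance of this cut equals $\frac{u(\partial_{G_i}(S'))-F(f)}{u(\partial_{G_i}(S'))+F(f)}$, which is increasing in $u(\partial_{G_i}(S'))$ and hence at least $\frac{F^*_i-F(f)}{F^*_i+F(f)}>\frac{\eps F^*_i}{2F^*_i}=\frac{\eps}{2}$; cuts not separating $s$ and $t$ have $a=b$ and are perfectly balanced. Now instantiate the balanced-directed cut sparsification result of \cref{sec:balanced_sparsification} with $\lambda_e=\ell_e$: a size-$O(n\log^2(n)/\eps)$ sample from the \niindex distribution preserves every $s$-$t$ cut of $G_f$ to within a $(1\pm\eps)$ factor w.h.p. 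We use only the lower-bound direction ("no positive cut vanishes"), which is monotone under including arcs \emph{more} often than prescribed --- and this is precisely the status of our $S$: it is the union of a fresh sample (whose per-arc multiplicities, relative to the correct size-$O(n\log^2(n)/\eps)$ distribution for $G_i$, are only inflated because edges inserted since the last augmentation carry probability mass the older fresh sample ignored) with every residual arc of the edges inserted since that augmentation, each present with probability $1$; moreover each $\ell_e$ is still a valid \niindex of $G_i$ by incremental resilience. Hence $S$ contains at least $(1-\eps)|\vec C|$ arcs, so at least one arc, of every positive $s$-$t$ cut $\vec C$ of $G_f$; since $f$ is not maximal, $G_f$ has an $s\to t$ path and thus minimum $s$-$t$ cut $\ge 1$, so $S$ has an $s\to t$ path, contradicting unreachability. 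Validity of $f$ as a flow is immediate since every augmenting path is a simple path in $S\subseteq G_f$. A union bound over the $O(m+F^*)$ sampling events keeps the total failure probability $n^{-\Omega(1)}$.

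\textbf{Running time.} Maintaining the \niindices costs $O(m\log(n)\alpha(n))$ over all insertions. For each inserted edge, appending its two arcs to $S$ and to the reachability structure and calling $\fn{Reachable}(t)$ once is amortized $O(1)$ (plus $O(\alpha(n))$ for auxiliary \unionfind bookkeeping), contributing $O(m\,\alpha(n))$ over the "no augmentation" steps. All remaining work is charged to augmentations: one augmentation runs $\fn{GetPath}$ and a unit augmentation in $O(n)$ time (the path is simple), re-initializes the reachability structure in $O(|S|)=O(n\log^2(n)/\eps)$, and draws a fresh sample of $O(n\log^2(n)/\eps)$ arcs at $O(\log n)$ amortized time per arc (the repeated-sampling primitive of \cref{sec:balanced_sparsification}), i.e.\ $O(n\log^3(n)/\eps)$ per augmentation. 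Since $F(f)$ is non-decreasing over the sequence, rises by exactly $1$ per augmentation, and never exceeds $F^*=F^*_m$, there are at most $F^*$ augmentations in total; each also triggers at most one additional failing $\fn{Reachable}$ call, which is absorbed. Although $|S|$ may transiently reach $\Omega(m)$ between augmentations, each inserted edge's arcs are appended to $S$ at most once before $S$ is discarded at the next resample, so this never adds more than $O(m)$ to the bound. Summing gives total time $O(m\log(n)\alpha(n)+nF^*\log^3(n)/\eps)$.

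\textbf{Main obstacle.} The crux is the sparsifier's correctness: proving inside \cref{sec:balanced_sparsification} that \emph{\niindices} (unlike strong connectivities, edge-connectivities, or effective resistances) yield a $\Pi$-connected, $\gamma$-overlap \decomposition (\cref{def:decomposition,def:gammaoverlap}) of the balanced residual graph with only polylogarithmic parameters, so that the generalization of the Fung et al.\ framework to balanced directed graphs applies with $\lambda_e=\ell_e$; and verifying that the oversampling scheme (fresh sample plus directly-appended arcs, with inflated multiplicities) still delivers the lower-bound cut preservation invoked above. The \balance computation and the amortized accounting are comparatively routine once that is established.
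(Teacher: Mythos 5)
Your writeup does not address the statement you were asked to prove. The statement is \cref{def:SSR}: the existence of Italiano's incremental single-source reachability structure with $O(m+n)$ initialization, amortized $O(1)$ insertion and reachability queries, and path retrieval in time proportional to the path length. This is a classical data-structure fact that the paper imports as a black box from~\cite{Ita86incdirected} and does not prove; a self-contained proof would describe the structure itself --- maintain a reachability tree rooted at $s$ with parent pointers (which is what makes $\fn{GetPath}$ run in time proportional to the path length); on $\fn{Insert}(\vec{uv})$, if $u$ is already reachable and $v$ is not, run a search from $v$ over the out-edges of currently unreachable vertices and attach every newly reached vertex to the tree; since each vertex becomes reachable at most once and each edge is scanned only a constant number of times over the whole insertion sequence, the total work is $O(m+n)$, i.e.\ amortized $O(1)$ per operation, and $\fn{Reachable}$ is a constant-time lookup. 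None of this appears in your proposal.

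What you actually wrote is a proof sketch of the paper's main result, \cref{thm:inc_apx_flow}: dynamizing Karger--Levine, using \niindices for incrementally stable sampling probabilities, establishing $(\eps/2)$-\balance of the residual graph of a non-$(1-\eps)$-approximate flow, invoking the balanced directed sparsification of \cref{sec:balanced_sparsification}, and doing the phase-based amortized accounting. That sketch tracks \cref{sec:inc_flow} of the paper quite faithfully (including the oversampling trick and the charging of directly-appended edges), but it \emph{uses} the reachability structure of \cref{def:SSR} as a primitive rather than proving it. As an argument for the stated fact it is therefore entirely off target, and you should either supply the Italiano construction sketched above or explicitly defer to the cited reference.
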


\fancydef{BST}{\textsc{BinarySearchTree}}
\begin{fact}[\textsc{BinarySearchTree}]
\label{def:BST}
There is a data structure that supports the following operations
\begin{itemize}
    \item $\fn{Initialize}$: Initializes the binary search tree data structure.
    \item $\fn{Insert}(e, x)$: Inserts the key $e$ with value $x$ in amortized $O(\log n)$ time.
    \item $\fn{Search}(x)$: Returns the largest key $e$ with value $\le x$ in amortized $O(\log n)$ time.
\end{itemize}
where $n$ is the number of items inserted into the data structure at time of operation.
\end{fact}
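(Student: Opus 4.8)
\textbf{Proof proposal for the \textsc{BinarySearchTree} fact.}

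The plan is to instantiate a standard self-balancing binary search tree (for instance, a red-black tree or an AVL tree) over the set of key-value pairs $(e, x)$, ordered by the value $x$; here the key $e$ is carried along merely as satellite data attached to the node whose ordering key is $x$. Because all three claimed operations are the textbook operations on such a structure, the proof amounts to citing the invariants of the chosen balanced-tree scheme and checking that each stated operation is realized by a constant number of root-to-leaf traversals plus $O(1)$ rotations. First I would fix the representation: each node stores a value $x$, the associated key $e$, left/right child pointers, a parent pointer, and whatever balance bookkeeping the scheme requires (a color bit for red-black trees, or a height/balance factor for AVL trees), and the tree maintains the search-tree invariant that for every node all values in its left subtree are $\le x$ and all values in its right subtree are $> x$ (ties on $x$ broken, say, by insertion order or arbitrarily, since \fn{Search} only needs \emph{some} key with value $\le x$).

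Next I would verify the operations one by one. \fn{Initialize} simply sets the root pointer to null (and, for red-black trees, sets up the sentinel leaf), which is $O(1)$. For \fn{Insert}$(e,x)$, perform the usual BST descent comparing $x$ against node values to locate the insertion point, splice in a new leaf holding $(e,x)$, and then run the scheme's rebalancing routine (recoloring and rotations walking back up toward the root); a red-black or AVL insertion touches $O(\log n)$ nodes on the descent and performs $O(1)$ structural rotations, giving $O(\log n)$ worst-case time, hence certainly amortized $O(\log n)$, where $n$ is the current number of stored items. For \fn{Search}$(x)$, I would argue that the largest stored value that is $\le x$ is found by a single root-to-leaf descent: maintain a ``best so far'' pointer, and at each visited node with value $y$, if $y \le x$ record this node as the current candidate and move to the right child (there may be a larger admissible value), otherwise move to the left child; when the descent falls off the tree, the recorded candidate node holds the largest value $\le x$, and we return its key $e$ (or report failure if no node ever satisfied $y \le x$). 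Correctness of this descent follows from the search-tree ordering invariant by the standard predecessor-search argument, and it visits $O(\log n)$ nodes by the balance invariant, so it runs in $O(\log n)$ time.

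Since none of this is genuinely difficult, the only ``obstacle'' is a matter of care rather than depth: one must state the tie-breaking rule for equal values precisely enough that the \fn{Search} descent is well-defined, and one must note that the bound is stated as \emph{amortized} $O(\log n)$ even though balanced BSTs actually achieve it in the worst case per operation (so the amortized claim is immediate and in fact weaker than what holds). I would close by remarking that any of the classical balanced-BST references supplies the detailed invariant maintenance, so the fact follows as a black-box consequence; no new analysis is required beyond the predecessor-search correctness check above.
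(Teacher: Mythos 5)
Your proposal is correct: the paper states this as a classical black-box fact with no proof of its own, and your instantiation via a standard balanced BST keyed on the value $x$ with a predecessor-search descent for $\fn{Search}$ is exactly the intended textbook realization (and, as you note, even gives worst-case rather than merely amortized $O(\log n)$ bounds). The only point worth keeping in mind is that in the paper's actual use (\cref{alg:inc_ni_sampling}) the inserted values are strictly increasing, so your tie-breaking concern never arises there.
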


\paragraph*{High probability bounds}
Finally, we say that an algorithm satisfies a guarantee \emph{with high probability} if the guarantee holds with probability at least $1 - 4/n$.

\paragraph*{Adversaries}
An adversary is said to be \emph{oblivious} if it fixes the update sequence before the algorithm picks its random bits. An adversary is \emph{output-adaptive} if it is allowed to choose the next update/query based on the past updates and the sequence of answers given by the algorithm to past queries. An adversary is \emph{state-adaptive} if it is output-adaptive and also allowed access to the internal state of the algorithm before choosing the next update.
\section{Incremental Maximum Flow}
\label{sec:inc_flow}

We assume that the algorithm is started on the empty graph.
Our main result in this section is the following:
\incapxflow*

\begin{algorithm}[ht]
\DontPrintSemicolon
\caption{Algorithm for incremental approximate maximum flow on undirected, uncapacitated graphs}
\label{alg:inc_flow}
\Fn{\fn{Initialize}($\eps$)}{
$\rho \gets \samplesize$\;
$f \gets \emptyset$, $F \gets 0$, $G_f \gets (V, \emptyset)$, $H \gets G_f$ \;
$D \gets \SSR.\fn{Initialize}(H, s)$\;
$K \gets$ \hyperref[alg:inc_ni_sampling]{\textcolor{black}{\fn{IncNISample}}}.\fn{Initialize} \algcomment{\cref{alg:inc_ni_sampling}}
}
\Fn{\fn{Insert}($e=(u,v)$)}{
    Add $\vec{uv}$ and $\vec{vu}$ to $H$ and $D$\;
    $K.\fn{Insert}(e)$\;
    \If{$D.\fn{Reachable}(t)$} {
        $p \gets D.\fn{GetPath}(t)$\;
        update $f$ and $G_f$ by augmenting along $p$, and update $F \gets F+1$\;
        reset $H \gets (V, \emptyset)$\;
        \For{$i = 1, 2, \ldots, \rho$} {
            $(u,v) \gets K.\fn{Sample}$\;
            \For{$\vec{ab} \in \{ \vec{uv}, \vec{vu} \}$ \label{line:bothdirstart}} {
                \If{$\vec{ab} \in G_f$} {
                    Add $\vec{ab}$ to $H$ \label{line:bothdirend}\;
                }
            }
        }
        reinitialize $D \gets \SSR.\fn{Initialize}(H, s)$
    }
}
\Fn{\fn{Query}(flow/value)}{
\textbf{if} query = flow \textbf{then return} $f$ \textbf{else return} $F$
}
\end{algorithm}

We present the algorithm achieving the guarantees in \cref{alg:inc_flow}. The algorithm works in phases. At the start of a new phase, we sample $O(n \log^2 (n) / \eps)$ edges from the residual graph using a specific probability distribution which we will discuss later, and add them to the \emph{sampled graph} $H$. On top of $H$, we run an incremental directed single-source reachability algorithm, called $D$, starting from $s$. At the beginning, and after each edge insertion, we query $D$ if there exists an $s \to t$ path in the sample.
As long as $D$ does not return ``Yes'', for each edge insertion $e = (u,v)$, we add edges $\vec{uv}$ and $\vec{vu}$ to $D$. When $D$ returns ``Yes'', this corresponds to an $s \to t$ path in the residual graph, and we retrieve this path from $D$, augment along this path in $G_f$, and start a new phase.

We need to argue about the time bounds and the correctness. For correctness, we show that whenever an edge insertion increases the max flow to a value larger than a $(1-\eps)^{-1}$ factor over the current flow maintained by the algorithm, the existence of an $s \to t$ path in the sampled graph $H$ is guaranteed with high probability. Since $H$ is a subsample of the residual graph $G_f$, this provides us with an augmenting path in $G_f$ along which we augment the flow.
With a union bound over the at most $n$ times when this can happen (since the max flow value is at most $n$ and each augmentation increases the flow by $1$), this will imply that at the end of every time step, the algorithm maintains a valid $(1-\eps)$-approximate $s$-$t$ flow with high probability.
To show this claim, we first show that an extension of Fung et al.'s high probability result~\cite{FHHP19sparsification} on independent sampling for cut sparsification on undirected graphs also extends to repeated sampling for cut sparsification on balanced directed graphs. When combined with the fact that the residual graph of a flow that is not a $(1-\eps)$-approximate max flow is $\Omega(\eps)$-balanced, this shows that sampling approximately $n \log^2(n)/\eps$ edges based on the probability distribution discussed below will preserve directed cuts, and thus also the existence of an augmenting path in residual graphs of non-$(1-\eps)$-approximate flows. Our initial sample at the beginning of a phase already satisfies the size requirement above, and since we add further edges directly to the maintained sample, we always oversample the rate required for the augmenting path preservation guarantee.

For the time bounds, assuming that the sampling can be done in time approximately $O(\log n)$ per sample and that the reachability data structure runs in total time proportional to the number of edges in $D$, we will show that the algorithm runs in $\tO(m + nF^*/\eps)$ time.
Note that the graph $H$ is resampled at most $F^*$ times, since each resample occurs after an augmenting path has been found in the residual graph. Each resample adds $\tO(n/\eps)$ edges to $D$, for a total of $\tO(nF^*/\eps)$ sampled edges inserted to $D$ throughout the algorithm. This is in addition to the $m$ edges which are inserted to $D$ directly on insertion to $G$, which gives the claimed running time bound.

Finally, we maintain \niindices incrementally for the probability distribution, which we discuss in \cref{sec:inc_ni}.
We use the following definition in our proofs.

\fancydef[phases][Phases]{phase}{Phase}
\begin{definition}[Phase]
\label{def:phase}
Let $t_0 = 0$, and let $t_i$ be the time step when the $i^{\mathrm{th}}$ augmenting path is found. Let $m_k$ be the number of edges inserted in time steps $(t_{k-1}, t_k]$. Phase $k$ refers to the computations performed after the $(k-1)^{th}$ augmentation finishes until the $k^{th}$ augmentation ends.
\end{definition}

We also use the following two statements directly, whose proofs can be found in \cref{sec:inc_ni} and \cref{sec:balanced_sparsification} respectively.

\begin{restatable}{lemma}{NIsample}
\label{lem:NI_sample}
There is an incremental data structure that deterministically maintains an NI forest packing (and thus the \niindex $\ell_e$ of each edge) of an undirected graph under edge insertions in total time $O( m \cdot \log m \cdot \alpha(n))$ where $\alpha(n)$ is the inverse Ackermann function and $m$ is the number of edges in the final graph. At any point of time, it also allows querying for an edge sampled from the probability distribution $\left\{ \ell_e^{-1}/L \right\}_{e \in E}$ where $L = \sum_e \ell_e^{-1}$ in amortized time $O(\log n)$ for each sample.
\end{restatable}

\begin{algorithm}[t]
\DontPrintSemicolon
\caption{Algorithm for directed balanced sparsification}
\label{alg:sample}
\KwInput{Directed graph $\vec{G} = (V, \vec{E})$, connectivity parameters $\{\lambda_e\}_{e \in E}$, parameters $\beta, \gamma \ge 1$, and $\eps < 1$}
\KwOutput{Sparsified graph $\vec{H} = (V, \vec{F}, w)$}
$\rho \gets \generalsamplesize$\;
$\vec{H} \gets (V, \emptyset, w)$\;
\For{$i \in 1, 2, \ldots, \rho$}{
    Sample an edge $e$ from the probability distribution $\{ p_e = \lambda_e^{-1} / \sum_e \lambda_e^{-1} \}$\;
    Insert $e$ into $\vec{H}$ with weight $w_e = (\rho \cdot p_e)^{-1}$ (additively increasing its weight if $e$ already exists in $\vec{H}$)
}
\end{algorithm}

\begin{restatable}{theorem}{generalsparsification}
\label{thm:generalsparsification}
Let $\vec{G} = (V, \vec{E})$ be any directed (multi-)graph, and $G$ be the underlying undirected graph. Let $\{\lambda_e\}_{e \in E}$ be some connectivity parameters in $G$, and $\beta, \gamma \ge 1$ and $\eps < 1$ be input parameters. Let $\vec{H} = (V, \vec{F}, w)$ be the random graph with $O\!\left(\gamma \beta \ln (n) \cdot \sum_{e \in E} \lambda_e^{-1} / \eps^2 \right)$ edges generated as in \cref{alg:sample}.

If there exists a $\Pi$-connected decomposition $\mathcal G = \{ G_i = (V, E_i): 1 \le i \le \Lambda \}$ of $G$ that satisfies \gammaoverlap, then for all cuts $\vec{C} = \partial_{\vec{G}}(S)$ of \balance at least $\beta^{-1}$ in $\vec{G}$, it holds simultaneously with probability $\ge 1 - 4/n^2$ that
\[
(1-\eps) \cdot u(\vec{C}) \le w(\vec{C}) \le (1+\eps) \cdot u(\vec{C}).
\]
\end{restatable}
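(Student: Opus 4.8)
The plan is to adapt the cut-sparsification analysis of Fung et al.~\cite{FHHP19sparsification} from undirected graphs to balanced directed cuts, reducing the whole statement to a per-class concentration estimate. Since the classes $F_i$ partition $\vec E$, we have $w(\vec C)=\sum_i w(\vec C\cap F_i)$ and $u(\vec C)=\sum_i u(\vec C\cap F_i)$, so by the triangle inequality it suffices to establish that, simultaneously over all cuts $\vec C=\partial_{\vec G}(S)$ of \balance at least $\beta^{-1}$ and all classes $i$,
\[
\bound
\]
Granting this, summing over $i$ and using \gammaoverlap (that is, $\sum_i u(C\cap E_i)\,2^{i-1}/\pi_i\le\gamma\,u(C)$), the balance bound $u(C)=u(\vec C)+u(\cev C)\le(\beta+1)\,u(\vec C)$, and $\sum_i u(\vec C\cap F_i)=u(\vec C)$, yields $|w(\vec C)-u(\vec C)|\le\eps\,u(\vec C)$, which is the theorem.

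Next I would fix a class $i$ and prove the displayed per-class bound; I write $D_i$ for its right-hand side and abbreviate $\mu=u(\vec C\cap F_i)$ and $\mathcal L=\sum_e\lambda_e^{-1}$. If $\vec C\cap F_i=\emptyset$ the bound is trivial, so assume otherwise; then $\Pi$-connectivity forces some edge of $F_i\subseteq E_i$ to have edge-connectivity at least $\pi_i$ in $G_i$, hence $u(C\cap E_i)\ge\pi_i$. I would bucket the relevant cuts into dyadic levels $a\ge0$ according to $u(C\cap E_i)\in[\,2^a\pi_i,\,2^{a+1}\pi_i)$. For a fixed such cut, $w(\vec C\cap F_i)=\sum_{k=1}^{\rho}X_k$ is a sum of independent variables, each either $0$ or equal to $(\rho p_e)^{-1}$ for some $e\in F_i$; since $2^{i-1}\le\lambda_e\le2^i-1$ on $F_i$, every nonzero value lies in $[\,2^{i-1}\mathcal L/\rho,\ 2^i\mathcal L/\rho\,)$, and $\E[w(\vec C\cap F_i)]=\mu$. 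A weighted Chernoff bound then gives $\Pr[\,|w(\vec C\cap F_i)-\mu|>D_i\,]\le 2\exp(-\Omega(D_i^2\,\rho/(\mathcal L\,2^i\max(\mu,D_i))))$, where when $D_i>\mu$ the lower tail is vacuous since $w(\vec C\cap F_i)\ge0$. Using $D_i\ge\frac\eps2\mu$ and $D_i\ge\frac\eps2\,u(C\cap E_i)2^{i-1}/(\pi_i\gamma(\beta+1))$, hence $D_i^2\ge\frac{\eps^2}4\,\mu\cdot u(C\cap E_i)2^{i-1}/(\pi_i\gamma(\beta+1))$, together with $u(C\cap E_i)\ge2^a\pi_i$, a short two-case check ($D_i$ above or below $\mu$) shows the exponent is $\Omega(\eps^2\,2^a\,\rho/(\gamma(\beta+1)\mathcal L))$; with $\rho$ set as in the statement this is $c\cdot2^a\ln n$ for a comfortably large absolute constant $c$ (this is what the explicit $128$ and $0.38$ buy), so a fixed cut fails its per-class bound with probability at most $n^{-c\,2^a}$.

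It remains to count, at level $a$ of class $i$, the number of distinct projections $\vec C\cap F_i$ over which to union-bound. Since every edge of $F_i$ has edge-connectivity at least $\pi_i$ in $G_i$, $F_i$ is contained in $H_{\pi_i}^{G_i}$, the $\pi_i$-high-connectivity edges of $G_i$, so the number of distinct $\vec C\cap F_i$ is at most the number of distinct $k$-projections $\partial_{\vec G_i}(S)\cap H_{\pi_i}^{G_i}$ with $k=\pi_i$, which at level $a$ have underlying undirected size $u(C\cap E_i)<2^{a+1}\pi_i$. Here the balance hypothesis does its second job: a directed cut is a subset of its underlying undirected cut, a set $S$ is determined up to complementation within components by $\partial_G(S)$, and Karger's bound that a graph has at most $n^{2\alpha}$ cuts of size at most $\alpha$ times its mincut refines (as established in \cref{sec:balanced_sparsification}) to a bound on small $k$-projections of balanced directed cuts; combining these gives $n^{O(2^a)}$ distinct projections at level $a$. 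Choosing $c$ large enough, $n^{-c\,2^a}$ dominates $n^{O(2^a)}$, so the union bound over level $a$ fails with probability $n^{-\Omega(2^a)}$; summing the resulting geometric series over $a\ge0$ and then over the $O(\log n)$ classes $i$ keeps the total failure probability below $4/n^2$.

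The main obstacle — and essentially the only genuinely new ingredient beyond the undirected case — is the directed cut counting: one must verify that restricting to \emph{balanced} directed cuts inflates Karger's undirected count only by a controllable factor (morally, a factor $2^{O(\#\text{components})}$ for the choice of $S$ versus $\bar S$), and, dually, that balance is exactly what converts the undirected quantity $u(C)$ appearing in \gammaoverlap into the directed $u(\vec C)$ we must approximate. By comparison the Chernoff estimate is standard; the remaining effort is the bookkeeping of threading one sample size $\rho$ through the nested union bounds over classes, dyadic levels, and exponentially many cuts so that it comes out as $\Theta(\gamma\beta\ln n\cdot\sum_e\lambda_e^{-1}/\eps^2)$.
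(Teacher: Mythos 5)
Your proposal follows the same overall route as the paper: reduce to a per-class bound via the triangle inequality, combine classes using $\gamma$-overlap together with $u(C)=u(\vec C)+u(\cev C)\le(\beta+1)u(\vec C)$, bucket cuts dyadically by $u(C\cap E_i)$, prove single-cut concentration with a multiplicative Chernoff bound for bounded summands (your Bernstein-style form is equivalent to the paper's Lemma~\ref{lem:chernoffbound}), and finish with a directed cut-counting lemma obtained by projecting onto the undirected $k$-projections of Fung et al. So this is essentially the paper's proof, but I want to flag two places where your account diverges from — or is less careful than — what is actually needed.

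First, you assign balance a role in the cut counting (``restricting to balanced directed cuts inflates Karger's undirected count only by a controllable factor'') and guess a factor of $2^{O(\#\text{components})}$. In the paper the directed cut-counting lemma (Lemma~\ref{lem:directed_cut_counting}) is stated and proved for \emph{all} directed cuts with no balance hypothesis, and the claimed inflation over the undirected $k$-projection count $n^{2\alpha}$ is only a factor of $2$: one maps each directed projection $\vec C\cap H_k$ to its underlying undirected projection $C\cap H_k$ and argues each fiber has size at most two. Balance is used only once, in the final combine step, precisely to replace $u(C)/(\beta+1)$ by $u(\vec C)$. If you genuinely needed a $2^{O(\#\text{components})}$ factor the union bound would not close, so this is a place your sketch has to be tightened to match the exponent.

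Second, and more substantively, your union bound is phrased as ``count the number of distinct projections $\vec C\cap F_i$ and multiply by the single-cut failure probability,'' but the per-cut threshold $D_i$ depends not only on $\vec C\cap F_i$ (through $u(\vec C\cap F_i)$) but also on $u(C\cap E_i)$, which can vary among cuts sharing the same projection $\vec C\cap F_i$. You cannot simply union-bound over projections without specifying which representative cut you instantiate the Chernoff bound for. The paper handles this by defining an $\vec R\subseteq\vec E$ to be ``bad'' for a cut, observing that if $\vec R$ is bad for some $\vec C$ it is also bad for the cut minimizing $u(C\cap E_i)$ among those with $\vec C\cap F_i=\vec R$ in the bucket $\mathcal C_{ij}$, and applying Chernoff to that minimizer. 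This monotonicity step is what makes the ``union bound over projections rather than cuts'' rigorous, and it is missing from your sketch.
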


\begin{remark}
This theorem guarantees sparsification for all balanced cuts, even in potentially unbalanced graphs. When restricted to only using strong connectivity, this statement was shown by Cen et al.~\cite{CCPS21cutbalance} with stronger parameters where the quality of sparsification depends on the balance of the cut, who then used it to give a clearer proof of Karger and Levine~\cite{KL15sampling}'s static result.
\end{remark}

\subsection{Running time}
\label{sec:time}

\begin{restatable}{lemma}{phasetime}
\label{lem:phasetime}
\phase $k$ takes total time $O(m_k \log (n) \alpha(n) + n \log^3 (n) / \eps)$, where $m_k$ is the number of edges inserted in \phase $k$ and $\alpha(n)$ is the inverse Ackermann function.
\end{restatable}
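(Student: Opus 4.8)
The plan is to split the work performed during \phase $k$ into two parts: the bookkeeping done on each of the $m_k$ edge insertions of the phase, and the one-time cost incurred when the $k^{\mathrm{th}}$ augmenting path is found and the sampled graph $H$ is rebuilt. I would bound the first part by $O(m_k \log(n)\alpha(n))$ and the second by $O(n\log^3(n)/\eps)$, so that adding them gives the lemma.

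For the per-insertion cost, consider a call \fn{Insert}$(e=(u,v))$ within \phase $k$ that does \emph{not} trigger an augmentation; there are at most $m_k$ such calls. It (i) adds the directed copies $\vec{uv},\vec{vu}$ to $H$ and to the reachability structure $D$, (ii) calls $K.\fn{Insert}(e)$, and (iii) queries $D.\fn{Reachable}(t)$. For (ii), \cref{lem:NI_sample} maintains the \niindices in total time $O(m\log(m)\alpha(n))$, i.e.\ amortized $O(\log(n)\alpha(n))$ per insertion (using $\log m = O(\log n)$), contributing $O(m_k\log(n)\alpha(n))$ over the phase. For (i) and (iii), the incremental \SSR of \cref{def:SSR} has amortized $O(1)$ insertion and query time and \fn{Initialize} cost linear in the size of its input graph, so the $m_k$ insertions and queries of the phase cost $O(m_k)$, on top of the single \fn{Initialize} call made at the end of the phase, which I account for below. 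Hence the per-insertion cost of \phase $k$ is $O(m_k\log(n)\alpha(n))$.

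For the end-of-phase cost, when $D.\fn{Reachable}(t)$ returns true the algorithm extracts an $s\to t$ path $p$ via $D.\fn{GetPath}(t)$ and augments $f$ and $G_f$ along it --- both in $O(n)$ time since $|p|\le n-1$ and $f,G_f$ are stored explicitly --- resets $H$ to the edgeless graph in $O(n)$ time (by allocating a fresh structure on $V$), runs the sampling loop, and finally calls $\SSR.\fn{Initialize}(H,s)$. The sampling loop makes $\rho=\samplesize$ calls to $K.\fn{Sample}$, each amortized $O(\log n)$ by \cref{lem:NI_sample}, plus $O(1)$ work per iteration to test membership of $\vec{uv},\vec{vu}$ in $G_f$ (a hash table keyed by ordered vertex pairs) and insert into $H$, for a total of $O(\rho\log n)$. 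Because $H$ was just emptied, at the moment of reinitialization $|H|\le 2\rho$, so $\SSR.\fn{Initialize}(H,s)$ costs $O(\rho+n)$. Summing, the end-of-phase cost is $O(\rho\log n+n)=O(n\log^3(n)/\eps)$, and combined with the previous paragraph the bound follows.

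The only step where I expect the accounting to require care --- and the potential pitfall --- is bounding the size of $H$ at the moment $D$ is reinitialized. During a phase $H$ can swell to $\Omega(m)$, since the directly inserted (``new'') edges are never removed mid-phase, and reinitializing $D$ on that inflated $H$ would cost $\Omega(m)$ and break the per-phase bound; the reset $H\gets(V,\emptyset)$ in \cref{alg:inc_flow}, performed immediately before the sampling loop, is exactly what I would invoke to rule this out, so that the reinitialized structure sees only the $O(\rho)$ freshly sampled edges. I would also note that the regrowth of $D$ back to $O(\rho+m_k)$ edges over the following phase is harmless, since the incremental \SSR has total update time linear in its final size and the extra $O(m_k)$ is absorbed into the first term, and that the amortized guarantees of \cref{lem:NI_sample} hold ``at any point of time'' and hence may be invoked at the intermediate steps of a phase.
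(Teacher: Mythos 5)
Your proposal is correct and follows essentially the same accounting as the paper's own proof: charge the $m_k$ edge insertions $O(\log(n)\alpha(n))$ apiece for the NI-index update plus amortized $O(1)$ for the reachability structure, and charge the one-time sampling and reinitialization at $O(\rho\log n)=O(n\log^3(n)/\eps)$, noting that the reset of $H$ before sampling is what keeps the reinitialization cost bounded by $O(\rho+n)$ rather than $\Omega(m)$. The only difference is cosmetic — you place the sample/reinit at the end of phase $k$ whereas the paper treats it as the start of phase $k+1$ — which shifts the attribution by one phase but yields the same per-phase bound.
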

\begin{proof}
Phase $k$ starts with sampling $\Osamplesize$ edges from the sampler, where each sample takes time $O(\log n)$ by \cref{lem:NI_sample}. $D$ has a total of $m_k + \Osamplesize$ many edges added to it in its entire lifetime, which gives a total update time of $O(m_k + n \log^2 (n)/\eps)$. Inserting each edge into the sampler takes amortized time $O(\log m \cdot \alpha(n))$ by \cref{lem:NI_sample}. Augmenting along a path takes time proportional to the length of the path, which is at most $n$.
\end{proof}

\begin{restatable}{lemma}{time}
\label{lem:time}
The algorithm takes total time $O(m \log (n) \alpha(n) + n \log^3 (n) F^*/\eps)$ where $F^*$ is the value of the final max flow after all edge insertions.
\end{restatable}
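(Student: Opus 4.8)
The plan is to obtain \cref{lem:time} by summing the per-phase running-time bound of \cref{lem:phasetime} over all phases, combined with two elementary counting facts: the edge insertions are partitioned among the phases, and the number of phases is $O(F^*)$.

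First I would note that, by \cref{def:phase}, each of the $m$ time steps lies in exactly one phase (time step $i$ belongs to phase $k$ precisely when $t_{k-1} < i \le t_k$), so if $P$ denotes the number of phases then $\sum_{k=1}^{P} m_k = m$. Next I would bound $P$: every phase but the last ends with an augmentation along an $s\to t$ path in $G_f$, and each such augmentation raises the value $F$ of the maintained flow by exactly $1$. Since the augmentation step maintains the invariant that $f$ is a feasible $s$-$t$ flow in the current graph, its value is always at most the current max-flow value, hence at most $F^*$; therefore at most $F^*$ augmentations occur and $P \le F^*+1$. (The one-time \fn{Initialize} call costs $O(\rho+n)=O(n\log^2(n)/\eps)$, which is subsumed.)

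Plugging these into \cref{lem:phasetime} then gives
\[
\sum_{k=1}^{P} O\!\left(m_k\log(n)\,\alpha(n)+\frac{n\log^3(n)}{\eps}\right)
= O\!\left(m\log(n)\,\alpha(n)+\frac{(F^*+1)\,n\log^3(n)}{\eps}\right),
\]
and it remains only to absorb the additive $n\log^3(n)/\eps$ coming from the (at most one) augmentation-free phase into the claimed bound: for $F^*\ge 1$ this is immediate, and when $F^*=0$ there is no $s$-$t$ flow to augment so the running time is governed by the $m\log(n)\,\alpha(n)$ term.

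I do not expect a real obstacle here: \cref{lem:time} is essentially just the aggregation of \cref{lem:phasetime}, and the only care required is in the bookkeeping — verifying the two counting identities above, and noting that although the data-structure costs quoted inside \cref{lem:phasetime} are amortized over the whole execution, they enter the bound additively and so sum across phases without loss, while the trailing augmentation-free phase (if any) contributes only lower-order terms.
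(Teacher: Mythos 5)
Your proof is correct and takes essentially the same route as the paper: sum the per-phase bound of \cref{lem:phasetime} over all phases, using $\sum_k m_k = m$ and the fact that the number of phases is bounded by $F^*$ (each augmentation raising $F$ by $1$). Your extra care about the possibly augmentation-free trailing phase, the $\fn{Initialize}$ cost, and the $F^*=0$ edge case are minor bookkeeping points the paper elides but does not contradict.
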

\begin{proof}
Let $K$ be the total number of phases in the algorithm. Then $K \le F^*$ since each phase increases the value of the flow by $1$.
Since the set of edges inserted in each phase are disjoint, $\sum_k m_k = m$.
By \cref{lem:phasetime}, phase $k$ takes time $O(m_k \log (n) \alpha(n) + n \log^3 (n)/\eps)$. Together, all phases take time
\[
\sum_{k = 1}^{K} O(m_k \log(n) \alpha(n) + n\log^3(n)/\eps) = O\left(m \log (n) \alpha(n) + n \log^3 (n) F^*/\eps\right) \qedhere
\]
\end{proof}

\subsection{Correctness}
\label{sec:correctness}

We first show that the residual graph of a non-$(1-\eps)$-approximate max flow is $\eps/2$-balanced. This then allows us to use our result on sparsifying balanced directed graphs.

\begin{restatable}{lemma}{residualcuts}
\label{lem:residualcuts}
Let $f$ be any $s$-$t$ flow in an undirected graph $G$ of value $F$, and let $G_f$ be its corresponding residual graph. Suppose $F \le (1-\eps) F^*$, where $F^*$ is the value of the maximum flow. Then every cut in $G_f$ is at least $\eps/2$-balanced.
\end{restatable}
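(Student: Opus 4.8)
The plan is to show that for any vertex subset $S$, the residual cut $\vec{C}(S) = \partial_{G_f}(S)$ and its reverse $\cev{C}(S)$ satisfy $u(\cev{C}(S)) \le (2/\eps) \cdot u(\vec{C}(S))$, which by \cref{def:balance} is exactly the statement that $\beta(\vec{C}(S))^{-1} \ge \eps/2$, i.e., the cut is $\eps/2$-balanced. First I would recall the structure of the residual graph: each undirected edge $(x,y)$ of $G$ contributes two residual arcs, and if the edge carries flow from $x$ to $y$ then both residual arcs point from $y$ to $x$; otherwise one points each way. Fix a set $S$ with $s \in S$ (the case $t \in S$ or $s,t$ on the same side is symmetric or trivial), and classify the undirected edges of $G$ crossing the cut $\partial_G(S)$ according to the direction in which $f$ routes flow across them. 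An edge crossing from $S$ to $\bar S$ carrying $f$-flow toward $\bar S$ contributes no forward residual arc and two backward arcs; one carrying $f$-flow toward $S$ contributes two forward and zero backward; and one carrying no flow contributes one each way.

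The key quantity is the net flow across the cut, which equals $F$ (the value of $f$) by flow conservation. If I let $a$ be the number of undirected crossing edges whose flow goes from $S$ toward $\bar S$ and $b$ be the number whose flow goes from $\bar S$ toward $S$, then $F = a - b$, hence $a \ge F$. Counting residual arcs: the number of backward residual arcs $u(\cev{C}(S))$ (arcs entering $S$, i.e., going from $\bar S$ to $S$) is $2a$ plus the zero-flow edges, while the number of forward residual arcs $u(\vec{C}(S))$ is $2b$ plus the zero-flow edges. Meanwhile, augmenting along any $s\to t$ path in $G_f$ increases the flow by $1$, and since $F \le (1-\eps)F^*$ there is still flow $\ge \eps F^* \ge \eps F/(1-\eps) \ge \eps F$ left to route; more usefully, the residual graph must support an $s$-$t$ flow of value $F^* - F \ge \eps F^*$. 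This residual flow of value $\ge \eps F^*$ must cross the cut $\vec{C}(S)$, so $u(\vec{C}(S)) \ge \eps F^* \ge \eps \cdot u(\cev{C}(S))/2$ — here I use that $u(\cev{C}(S))$, the backward capacity, is at most $2F^* $ is not quite immediate, so the cleaner route is: the residual capacity from $\bar S$ to $S$ minus the residual capacity from $S$ to $\bar S$ equals (original undirected crossing count) $- F - $ (something) and one shows directly $u(\cev C) - u(\vec C) = 2F$ is wrong in general; instead $u(\cev C(S)) - u(\vec C(S))$ equals twice the net $f$-flow across the cut only for the two-arc bookkeeping, giving $u(\cev C) - u(\vec C) = 2F$. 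Combined with $u(\vec C) \ge F^* - F \ge \eps F^* \ge \eps(u(\cev C)+u(\vec C))/2$ (using $u(\cev C) + u(\vec C) = 2 \cdot |\partial_G(S)| \le 2F^*$ would need the cut to be small — not true), I will instead bound $u(\cev C) = u(\vec C) + 2F \le u(\vec C) + 2F^*/(1) $ and use $u(\vec C) \ge \eps F^*$ directly when $F^* \ge $ the cut value, handling the large-cut case separately.

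The main obstacle, as the above meandering shows, is the case where the undirected cut $\partial_G(S)$ is very large compared to $F^*$: then both $u(\vec C)$ and $u(\cev C)$ are dominated by zero-flow edges and are nearly equal, so balance is close to $1$ and the bound is easy; the delicate case is a small cut that is nearly saturated by $f$. I expect the clean argument to be: (i) establish the identity $u(\cev C(S)) - u(\vec C(S)) = 2F$ from the residual-arc bookkeeping (net flow across cut $= F$, each unit contributes $+2$ to the backward-minus-forward count); (ii) observe $G_f$ supports a residual $s$-$t$ flow of value $F^* - F \ge \eps F^*$, which must cross from $S$ to $\bar S$, so $u(\vec C(S)) \ge F^* - F$; (iii) since $F \le (1-\eps)F^*$ we get $F^* \ge F/(1-\eps)$, so $F^* - F \ge \eps F /(1-\eps) \ge \eps F$, but more to the point $u(\vec C) \ge F^* - F \ge \eps F^* \ge \eps \max(F, F^*-F)$; and finally (iv) combine: $u(\cev C) = u(\vec C) + 2F$, and I need $u(\vec C) \ge (\eps/2) u(\cev C) = (\eps/2)(u(\vec C) + 2F)$, i.e. $(1-\eps/2) u(\vec C) \ge \eps F$, i.e. $u(\vec C) \ge \eps F/(1 - \eps/2)$. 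Since $u(\vec C) \ge F^* - F \ge \eps F^* \ge \eps F \ge \eps F /(1-\eps/2)$ fails by the last step — so the right constant needs $u(\vec C) \ge F^*-F$ and $F^* - F \ge \frac{\eps}{1-\eps/2} F$, equivalently $F^*(1-\eps/2) \ge F$, which holds since $F \le (1-\eps)F^* \le (1-\eps/2)F^*$. This closes the argument, and the $\eps/2$ (rather than $\eps$) in the statement is exactly the slack absorbed here.
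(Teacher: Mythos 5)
Your proposal is in substance the same argument the paper gives: both rest on the bookkeeping identity that an $s$-$t$ cut with underlying undirected capacity $c$ has residual capacities $u(\vec C)=c-F$ and $u(\cev C)=c+F$ (equivalently, your $u(\cev C)-u(\vec C)=2F$), together with the lower bound $c\ge F^*$ (which you phrase equivalently as $u(\vec C)\ge F^*-F$ via the residual max flow), and the hypothesis $F\le(1-\eps)F^*$; you also correctly note that non-$s$-$t$ cuts and $t$-$s$ cuts are trivially $\ge 1$-balanced, as the paper does.

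One small algebraic slip in your step (iv): $F^*-F\ge\frac{\eps}{1-\eps/2}F$ is equivalent to $(1-\eps/2)F^*\ge(1+\eps/2)F$, not to $(1-\eps/2)F^*\ge F$ as you wrote. Your claimed sufficient condition $F\le(1-\eps/2)F^*$ does not by itself give the correct inequality; however, the actual hypothesis $F\le(1-\eps)F^*$ does, since $(1-\eps)(1+\eps/2)=1-\eps/2-\eps^2/2\le 1-\eps/2$. With this correction the argument closes. The paper avoids the slip by writing the ratio directly as $\frac{c-F}{c+F}\ge\frac{c-F}{2c}=\frac12\bigl(1-\tfrac{F}{c}\bigr)\ge\frac12\bigl(1-\tfrac{F}{F^*}\bigr)\ge\frac{\eps}{2}$, which is a slightly cleaner chain but uses exactly the same two facts ($c\ge F$ and $c\ge F^*$).
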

\begin{proof}
For any cut $\vec{C} = \partial(S)$ that is not an $s$-$t$ cut (or a $t$-$s$ cut), the net flow out of $S$ in $G_f$ is zero, and thus $\vec{C}$ is $1$-balanced. All $t$-$s$ cuts are $\ge 1$-balanced since the residual graph has more flow in the $s \to t$ direction and thus more capacity in the $t \to s$ direction.
Consider an $s$-$t$ cut $\vec{C}$. Let $c$ be the capacity of the corresponding undirected cut $C$ in the graph $G$. Since $F$ units of flow cross $C$ in the forward direction $s \to t$, the forward capacity across $C$ in $G_f$ is $c - F$, and the backward capacity is thus $c + F$ since the total capacity is $2c$. Using (in order) $c \ge F$, $c \ge F^*$, and $F \le  (1-\eps) F^*$, we get
\[
\beta(\vec{C})
= \frac{u(\vec{C})}{u(\cev{C})}
= \frac{c-F}{c+F}
\ge \frac{c-F}{2c}
= \frac{1}{2} \cdot \left( 1 - \frac{F}{c} \right)
\ge \frac{1}{2} \cdot \left( 1 - \frac{F}{F^*} \right)
\ge \frac{\eps}{2}
\qedhere
\]
\end{proof}

We then show that the sum of inverse NI indices is not too large.

\begin{restatable}{lemma}{NIinversesum}
\label{lem:NIinversesum}
Let $\left\{ \ell_e \right\}_{e \in E}$ be a set of NI indices for an undirected (multi-)graph $G$ with $m$ edges. Then $\sum_e \ell_e^{-1} \le 2 n \log m$.
\end{restatable}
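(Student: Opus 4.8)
The plan is to bound the number of edges with each NI index $i$ and then sum the reciprocals. First I would observe that the defining property of an NI forest packing is that $T_i$ is a \emph{maximal} spanning forest of the graph $G_i := G \setminus \bigcup_{j<i}\{T_j\}$ obtained after removing the first $i-1$ forests. A maximal spanning forest of a graph on $n$ vertices has at most $n-1$ edges, so $|T_i| \le n-1$ for every $i$. Hence if the largest NI index used is $\Lambda_{\max}$, the total edge count satisfies $m = \sum_{i=1}^{\Lambda_{\max}} |T_i| \le (n-1)\Lambda_{\max}$; but more importantly this already shows $\Lambda_{\max} \le m$ trivially, and with a bit more care $\Lambda_{\max} \le m/1$ — I will want the sharper statement that an edge of NI index $i$ lies in a subgraph that still contains $i$ edge-disjoint spanning forests (namely $T_i, T_{i+1}, \ldots$ restricted appropriately), which is the route to a connectivity lower bound rather than just a counting bound.

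The key quantitative step is: an edge $e$ with $\ell_e = i$ has edge-connectivity at least $i$ in $G$. This is the standard Nagamochi--Ibaraki fact: since $e \in T_i$ and $T_1,\ldots,T_i$ are edge-disjoint spanning forests each of which (by maximality) connects the two endpoints $u,v$ of $e$ within the residual graph at its stage, one gets $i$ edge-disjoint $u$--$v$ paths in $G$, so the min $u$--$v$ cut has size $\ge i$, i.e. the \edgeconnectivity of $e$ is $\ge i$. Granting this, I would then invoke the well-known cut-counting / forest-decomposition bound that a graph on $n$ vertices has at most $n-1$ edges of \edgeconnectivity $\ge 1$ above any threshold in the following averaged sense: more precisely, for each $i$, the set of edges with \edgeconnectivity $\ge i$, call it $H_i$ (this is exactly the set $H_k$ from \cref{def:kprojection} with $k=i$), when contracted appropriately has few edges — the cleanest form is that the number of edges $e$ with $\ell_e \ge i$ is at most $(n-1)\cdot(\text{number of forests from index } i \text{ on})$, which is not quite what I want. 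The cleaner path: directly use $|\{e : \ell_e = i\}| = |T_i| \le n-1$.

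So the final computation is simply
\[
\sum_{e \in E} \ell_e^{-1} = \sum_{i \ge 1} \frac{|T_i|}{i} \le \sum_{i=1}^{\Lambda_{\max}} \frac{n-1}{i} \le (n-1) \cdot H_{\Lambda_{\max}} \le (n-1)(1 + \ln \Lambda_{\max}),
\]
where $H_k$ denotes the $k$-th harmonic number. It remains to bound $\Lambda_{\max}$, the largest NI index. Since every forest $T_i$ with $i \le \Lambda_{\max}$ is nonempty (the packing partitions all of $E$, and forests are consumed in order, so there are no empty forests before the last nonempty one), we have $m \ge \Lambda_{\max}$, hence $\ln \Lambda_{\max} \le \ln m$, giving $\sum_e \ell_e^{-1} \le (n-1)(1+\ln m) \le 2n\log m$ for $m$ large enough (absorbing the constant into the "$2$"; one should double-check small cases, but for $m \ge 2$ one has $1 + \ln m \le 2\log m$ if $\log$ is base $2$, since $1 + \ln m \le 1 + \ln 2 \cdot \log_2 m \le 2\log_2 m$ once $\log_2 m \ge 1/(2-\ln 2)$, and the remaining tiny cases are trivial).

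The main obstacle I anticipate is getting the right bound on $\Lambda_{\max}$ cleanly and being careful about whether the NI index can exceed something like $\min(m, \text{max degree})$; the truly safe and simple route is the one above — each $T_i$ has at most $n-1$ edges and there are at most $m$ nonempty forests — which sidesteps any need for the edge-connectivity interpretation entirely. The edge-connectivity fact $\ell_e \le \lambda_e$ (connectivity of $e$) is a nice sanity check and is exactly why NI indices serve as valid connectivity parameters elsewhere in the paper, but for \emph{this} lemma pure counting suffices and is less error-prone.
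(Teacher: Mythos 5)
Your proposal is correct and follows essentially the same route as the paper: bound each $|T_i| \le n-1$ because $T_i$ is a forest, note the number of nonempty forests is at most $m$, and bound the resulting harmonic sum by $O(\log m)$. The digression about edge-connectivity is unnecessary (as you yourself conclude), and the only other difference from the paper is that you spell out the harmonic-number constant a bit more carefully; otherwise the arguments match.
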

\begin{proof}
Consider the NI forest packing $T_1, T_2, \ldots$ corresponding to the NI indices $\ell_e$. Since each forest contains at least one edge, the total number of forests is at most $m$. As each edge $e$ in $T_i$ has $\ell_e = i$, the sum of NI indices of all edges in forest $i$ contributes at most $\frac{n-1}{i}$ to the total sum. Thus,
\[
\sum_{e \in E} \ell_e^{-1}
= \sum_{1 \le i \le m} \sum_{e \in T_i} \ell_e^{-1}
\le \sum_{1 \le i \le m} \frac{n-1}{i}
\le (n-1) \cdot (2 \log m)
\le 2 n \log m
\]
where $ 1 + \frac{1}{2} + \frac{1}{3} + \dots + \frac{1}{m}$ is the $m$-th harmonic number,
which is upper bounded by $2 \log m$.
\end{proof}

Since we work with the residual graph which has two copies of each edge of $G$, we prove the following lemma to show that an NI forest packing of $G$ can be easily transformed into an NI forest packing of the undirected residual graph.

\begin{restatable}{lemma}{niresidual}
\label{lem:niresidual}
Let $\{ \ell_e \}_{e \in E}$ be a set of NI indices for an undirected, simple graph $G = (V, E)$. Let $E'$ be a copy of the edges in $E$. Then, for the multigraph $H = (V, E \cup E')$ which doubles the edges in $E$, assigning $h_e = 2 \ell_e - 1$ and $h_{e'} = 2 \ell_e$ is a set of NI indices for $H$.
\end{restatable}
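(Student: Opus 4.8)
The plan is to build an explicit NI forest packing of $H$ by interleaving the given packing of $G$ with its mirror copy. Let $T_1, T_2, \ldots$ be the NI forest packing of $G$ witnessing the indices $\{\ell_e\}$, so that $T_i$ is a maximal spanning forest of $G_{\ge i} := G \setminus \bigcup_{j<i} T_j = (V, \{e : \ell_e \ge i\})$, and for each $i$ let $T_i'$ denote the set of $E'$-copies of the edges in $T_i$. I would define the candidate packing of $H$ by $S_{2i-1} := T_i$ and $S_{2i} := T_i'$ for every $i$, and then verify the three defining properties of an NI forest packing: each $S_k$ is a forest (immediate, since $T_i$ and $T_i'$ are forests); $\bigcup_k S_k$ partitions $E(H)$ (immediate, since $\bigcup_i T_i$ partitions $E$ and hence $\bigcup_i(T_i \cup T_i')$ partitions $E \cup E'$); and each $S_k$ is a \emph{maximal} spanning forest of $H \setminus \bigcup_{j<k} S_j$, which is the only point requiring an argument.

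For the maximality, the key observation is that deleting or adding parallel edges never changes the connected components of a graph, so a subgraph is a maximal spanning forest exactly when it contains one spanning tree per connected component. I would track the residual edge set explicitly: after removing $S_1, \ldots, S_{2i-2}$, the graph $H \setminus \bigcup_{j<2i-1} S_j$ has edge set $\{e : \ell_e \ge i\} \cup \{e' : \ell_e \ge i\}$, i.e.\ a doubled copy of $G_{\ge i}$, whose components coincide with those of $G_{\ge i}$; since $T_i$ spans every component of $G_{\ge i}$, it is a maximal spanning forest here, so $S_{2i-1} = T_i$ is legitimate. Next, after additionally removing $S_{2i-1} = T_i$, the residual edge set is $\{e : \ell_e \ge i+1\} \cup \{e' : \ell_e \ge i\}$; every surviving $E$-edge is parallel to a surviving $E'$-edge, so the components of this graph are again exactly those of $G_{\ge i}$, and $T_i'$ (a copy of $T_i$) again spans every component, making $S_{2i} = T_i'$ a maximal spanning forest. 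Iterating over $i$ gives the claim.

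Finally I would read off the indices: if $\ell_e = i$ then $e \in T_i = S_{2i-1}$ and $e' \in T_i' = S_{2i}$, so the packing $S_1, S_2, \ldots$ assigns $e$ the index $2i-1 = 2\ell_e - 1$ and $e'$ the index $2i = 2\ell_e$, as required.

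\textbf{Expected main obstacle.} There is no deep obstacle; the only thing to get right is the bookkeeping of which copies survive after each removal and the repeated appeal to ``parallel edges do not change connectivity'' to conclude maximality. One should also note that maximality \emph{forbids} adding a surviving $e'$ to $T_i$ (it would close a cycle), so there is no ambiguity in the forest chosen at each step — this is what pins the indices to exactly $2\ell_e-1$ and $2\ell_e$ rather than merely bounding them.
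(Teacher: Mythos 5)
Your proposal is correct and matches the paper's proof exactly: the paper also uses the interleaved packing $T_1, T_1, T_2, T_2, \ldots$ (stated in one line without justification), and you simply supply the maximality bookkeeping that the paper leaves implicit.
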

\begin{proof}
If $T_1, T_2, \ldots$ forms an NI forest packing of $G$, then $T_1, T_1, T_2, T_2, \ldots$ forms an NI forest packing of $H$. Letting $e$ be the edge in the first copy of $T_i$ and $e'$ be the edge in the second copy gives the lemma.
\end{proof}

We use the result of \cite{FHHP19sparsification} that shows that an NI forest packing can be used to obtain a \decomposition that satisfies \piconnectivity and \gammaoverlap.

\begin{restatable}[\cite{FHHP19sparsification}]{lemma}{NIdecomposition}
\label{lem:NIdecomposition}
Let $T_1, T_2, \ldots$ be an NI forest packing of an undirected graph $G$ leading to \niindices $\ell_e$, and let $\mathcal F = \{ F_i \}$ be the corresponding \connectivityclasses. Then $E_1 = F_1$ and $E_i = F_{i-1} \cup F_i$ for $i \ge 2$ is a \decomposition that satisfies \piconnectivity and \gammaoverlap for $\pi_i = 2^{i-1}$ and $\gamma = 2$, i.e.,
 for all $i$ and for all cuts $C = \partial_{G}(S)$,
\[
\sum_{i = 0}^\Lambda u(C \cap E_i) \le 2 \cdot u(C).
\]
\end{restatable}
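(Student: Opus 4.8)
The plan is to verify the two conditions of the statement essentially separately: that $(E_i)_i$ is a $\Pi$-connected \decomposition with $\pi_i = 2^{i-1}$, and that it has \gammaoverlap with $\gamma = 2$. The overlap condition is purely combinatorial; the connectivity condition requires the structure of NI forest packings.

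I would prove the overlap bound first. The \connectivityclasses $F_1,\dots,F_\Lambda$ partition $E$, so each edge $e$ lies in a unique class $F_k$; then by construction $e\in E_k$ (as $F_k\subseteq E_k$) and $e\in E_{k+1}$ (as $E_{k+1}=F_k\cup F_{k+1}$), and $e$ lies in no other $E_j$. Hence for every cut $C=\partial_G(S)$,
\[
\sum_i u(C\cap E_i)=\sum_{e\in C}\bigl|\{\,i:e\in E_i\,\}\bigr|=2\,u(C).
\]
If $\pi_i=2^{i-1}$ then the weights $2^{i-1}/\pi_i$ in \cref{def:gammaoverlap} are all $1$, so this is exactly the $\gamma$-overlap inequality with $\gamma=2$, and in particular the displayed inequality of the statement.

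For the $\Pi$-connectivity bound I would use the defining property of an NI forest packing: if $e=(u,v)$ has \niindex $\ell_e=j$ then, for each $j'<j$, the forest $T_{j'}$ is a maximal spanning forest of $G\setminus\bigcup_{j''<j'}T_{j''}$ which still contains $e$, so $u$ and $v$ must be connected within $T_{j'}$, and $T_{j'}$ carries a (unique) $u$-$v$ path; these $j-1$ paths together with $e\in T_j$ are pairwise edge-disjoint. Now fix $e\in F_i$, so $2^{i-1}\le\ell_e\le 2^i-1$, and note that $E_i=F_{i-1}\cup F_i$ is exactly the union of the forests $T_{2^{i-2}},T_{2^{i-2}+1},\dots,T_{2^i-1}$. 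Since $2^{i-2}\le 2^{i-1}\le\ell_e\le 2^i-1$, all of $T_{2^{i-2}},\dots,T_{\ell_e}$ sit inside $E_i$, so the $u$-$v$ paths in $T_{2^{i-2}},\dots,T_{\ell_e-1}$ together with $e$ exhibit $\ell_e-2^{i-2}+1$ edge-disjoint $u$-$v$ paths living entirely inside $G_i=(V,E_i)$, which certifies edge-connectivity at least $\ell_e-2^{i-2}+1\ge 2^{i-2}+1$ for $e$ in $G_i$ (with $i=1$ handled separately: the bridge $e\in T_1=E_1$ has edge-connectivity $1$).

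I expect the crux to be matching the claimed constant $\pi_i=2^{i-1}$, which the argument above does \emph{not} deliver for $i\ge 3$: it only gives $\pi_i\ge 2^{i-2}+1$, because the lowest forest available inside $E_i$ is $T_{2^{i-2}}$ rather than $T_1$, and this loss is unavoidable with this decomposition. Indeed, take $G$ to be four parallel $u$-$v$ edges $e_1,e_2,e_3,e_4$ with $\ell_{e_k}=k$; then $e_4\in F_3$ but $E_3=\{e_2,e_3,e_4\}$ gives it edge-connectivity only $3<4=2^{3-1}$. So I would instead state $\Pi$-connectivity with $\pi_i=2^{i-2}$ (equivalently $\pi_i=2^{i-2}+1$), in which case the weights $2^{i-1}/\pi_i$ are at most $2$ and one obtains \gammaoverlap with $\gamma=4$; the robust fact that survives unchanged is that each edge lies in at most two of the $E_i$, i.e.\ the displayed $\sum_i u(C\cap E_i)\le 2\,u(C)$. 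Before finalizing I would re-examine the exact windowing in \cref{def:connectivityclass,def:decomposition}, since a one-step shift there is precisely what moves the target constant between $2^{i-1}$ and $2^{i-2}$, and I would then propagate the adjusted constants into \cref{thm:generalsparsification}.
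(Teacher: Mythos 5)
Your analysis is essentially correct, and you have found a genuine error in the stated constants of the lemma (which the paper imports from Fung et al.\ without reproving). Your counterexample is airtight: four parallel $u$-$v$ edges with $\ell_{e_k}=k$ put $e_4$ in $F_3$, while $E_3 = F_2 \cup F_3 = \{e_2,e_3,e_4\}$ contains only three parallel edges, so $e_4$ has edge-connectivity $3 < 4 = 2^{3-1}$ in $G_3$. The underlying reason is exactly what you identify: the lowest forest index available in $E_i = F_{i-1} \cup F_i$ is $2^{i-2}$, not $1$, so the NI-packing argument only certifies $\ell_e - 2^{i-2} + 1 \ge 2^{i-2}+1$ edge-disjoint $u$-$v$ paths inside $G_i$, i.e.\ $\pi_i = \max(1, 2^{i-2})$ rather than $2^{i-1}$. (Your assertion that $\sum_i u(C\cap E_i) = 2u(C)$ exactly is a hair off since edges in $F_\Lambda$ lie only in $E_\Lambda$, but the $\le 2u(C)$ bound, which is what matters, is correct.)

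The correct statement is then $\Pi$-connectivity with $\pi_i = \max(1,2^{i-2})$ and $\gamma$-overlap with $\gamma = 4$, since the weights $2^{i-1}/\pi_i$ are at most $2$ and each edge lies in at most two of the $E_i$. Interestingly, the displayed inequality $\sum_i u(C\cap E_i)\le 2u(C)$ in the lemma statement remains true as written --- the error is only in the claimed parameters $\pi_i, \gamma$ and in the ``i.e.''\ that equates the displayed inequality with $\gamma$-overlap (which requires $2^{i-1}/\pi_i=1$). The downstream impact is only a constant: in the proof of \cref{lem:nisampling} one should set $\gamma = 4$ instead of $\gamma = 2$, which doubles the constant $5390$ in \samplesize\ and propagates through \cref{thm:generalsparsification}, but changes nothing asymptotically. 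You were right to suspect a windowing issue --- the $F_{i-1}$ versus $F_i$ shift in \cref{def:decomposition} is exactly the source of the $2^{i-2}$ vs.\ $2^{i-1}$ discrepancy.
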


Using these, we will first show that sampling $\Omega(n \log^2(n)/\eps)$ edges suffices to sample at least one edge across all balanced cuts.

\begin{restatable}{lemma}{nisampling}
\label{lem:nisampling}
Consider a simple undirected, uncapacitated graph $G$ with some $s$-$t$ flow $f$ of value $F$. Let $\ell_e$ be a \niindex of edge $e$. Let $H' = (V, \vec{F})$ be a sample of $\samplesize$ edges chosen i.i.d.~with probability $\left\{ \ell_e^{-1} / \sum_e \ell_e^{-1} \right\}_{e \in G_f}$ from the residual graph $G_f$.
Then it holds with probability $1-4/n^2$ that for every directed cut $\vec{C} = \partial(S)$ in $G_f$ of balance at least $\eps/2$, there is an edge $e \in \vec{C}$ that is sampled in $H'$.
In particular, if $F^*$ is the value of the max flow in $G$ and $F \le (1-\eps) F^*$,
then there exists an augmenting path in the sampled graph $H'$ with probability $1 - 4/n^2$.
\end{restatable}
\begin{proof}
We will first argue that sampling the mentioned number of edges and reweighting them as in \cref{alg:sample} gives a (weighted) cut sparsifier $\vec{H} = (V, \vec{F}, w)$ of $G_f$.
This implies positive weight across every balanced cut in $\vec{H}$.
Then, we note that the property of presence of an edge across the cut holds regardless of the presence of weights. Since the sample $H'$ chosen in the lemma is an unweighted version of $\vec{H}$, the lemma follows.

Since there are at most ${\binom n 2}$ edges in a simple graph $G$ (and thus at most $n^2$ edges in a residual graph), \cref{lem:NIinversesum} gives that that $\sum_{e \in E} \ell_e^{-1} \le 4 n \log n$.
By \cref{lem:NIdecomposition}, there is a \decomposition for NI indices that satisfies \piconnectivity and \gammaoverlap for $\pi_i = 2^{i-1}$ and $\gamma = 2$.
Thus, if one runs \cref{alg:sample} with $\rho = \samplesize$ edges, then by \cref{thm:generalsparsification} one obtains a weighted $(1+\eps')$-sparsifier $\vec{H} = (V, \vec{F}, w)$ of the residual graph $G_f$ where the parameters are set as $\eps' = 1/2$, $\gamma = 2$, $\beta = 2/\eps$, and $\sum_e \ell_e^{-1} \le 4 n \log n$. Thus, for all cuts $\vec{C}$ in $G_f$ of balance at least $\eps/2$~(\cref{lem:residualcuts}), we have that the weighted sparsifier $\vec{H} = (V, \vec{F}, w)$ from \cref{thm:generalsparsification} satisfies
\[
w(\vec{C}) \ge \frac{u(\vec{C})}{2} > 0
\]
simultaneously with probability $1 - 4/n^2$. In particular, this also means that at least one edge from $S$ to $\bar{S}$ is sampled in $\vec{H}$ for every directed $s$-$t$ cut $\partial(S)$.

Defining $H' = (V, \vec{F})$ to be the unweighted graph obtained from $\vec{H}$ by dropping the edge weights, the existence of an edge crossing $S$ to $\bar{S}$ holds in $H'$ as well for all cuts $\partial(S)$ with balance at least $\eps/2$ simultaneously, with probability $1-4/n^2$.
Since the sampled graph stated in the lemma is generated exactly as in $H'$, the lemma follows.
\end{proof}

\begin{remark}
While we discuss our algorithm purely for simple graphs, this is not an intrinsic barrier. We use the simpleness to apply Lemma~\ref{lem:NIinversesum} to Lemma~\ref{lem:nisampling}, where we concretely use $2 \log n$ as an upper bound on $\log m$, and for a union bound in \cref{lem:connectivityclassconc}. If we are given a different upper bound on $m$ beforehand, then we could use this value instead in the algorithm to obtain the same guarantees.
\end{remark}

This lemma gives us that right after a sample happens, and before any future edge insertions, the augmenting path guarantee holds with high probability.

\begin{restatable}{lemma}{beginningcorrectness}
\label{lem:beginningcorrectness}
At the beginning of any phase $k$, if $F \le (1-\eps)F^*_{t_{k-1}}$, then there exists an augmenting path in $H$ with probability $1 - 4/n^2$. Here, $F^*_i$ is the max flow at time $i$, and $t_{k-1}$ is the time step when the $(k-1)$-st augmentation ends. This guarantee holds against output-adaptive adversaries.
\end{restatable}
\begin{proof}
Recall that at the beginning of a phase, the graph $H$ is sampled according to \niindices in the undirected graph $G$.
We show that the conditions of \cref{lem:nisampling} hold, even with an output-adaptive adversary.
Note that the sampling \emph{probabilities} are maintained deterministically and are known even to an oblivious adversary.
While an output adaptive adversary knows the flow maintained by the algorithm (and thus the current residual graph as well), the sampling according to \niindices is \emph{independent} of the flow maintained by the algorithm and its previous outputs. \cref{lem:nisampling} shows the existence of an augmenting path for the residual graph of \emph{any} flow, so no matter which updates were performed in the past and what information the output-adaptive adversary has. Since we assume in the lemma that we are at the beginning of a phase, there have been no edge insertions after the sampling of $H$.
Since we sample an undirected edge and add both directions of that edge into $H$ (lines~\ref{line:bothdirstart} to \ref{line:bothdirend}), this is an oversample of the rate required by \cref{lem:nisampling}, which only requires adding one direction of the edge.
Thus, the claim follows by the guarantees of \cref{lem:nisampling}.
\end{proof}

To extend this guarantee to all time steps, we show that \cref{lem:nisampling} holds regardless of the future sequence of edge insertions that happen in the incremental algorithm. Thus, an adversary who cannot see the internal randomness of the sample does not get any power by choosing future updates, since the guarantee holds regardless of the choice of future edge insertions.
\begin{restatable}{lemma}{anyfutureinsertions}
\label{lem:anyfutureinsertions}
Consider an undirected, uncapacitated graph $G = (V, E)$ with some $s$-$t$ flow $f$ of value $F$.
Let $H' = (V, \vec{F})$ be the graph sampled from $G_f$ as in \cref{lem:nisampling}, and let $\mathcal E = \{ X \subseteq (V \times V) \mid X \cap E = \emptyset \}$ be the set of all possible future updates. For any $X \in \mathcal E$, define $F^*_X$ to be the max flow in the graph $G_X = (V, E \cup X)$, and define $H'_X = (V, \vec{F} \cup X)$. Then the following guarantee holds with probability at least $1 - 4/n^2$: for every $X \in \mathcal E$, if $F \le (1-\eps) F^*_X$, then there exists an augmenting path in $H'_X = (V, \vec{F} \cup X)$.
\end{restatable}
\begin{proof}
By \cref{lem:nisampling}, we have that with probability at least $1 - 4/n^2$, the initially sampled graph $H'$ of $G_f$ satisfies the following: for any directed $s$-$t$ cut $\partial(S)$ of balance at least $\eps/2$, there is at least one edge sampled from $S$ to $\bar{S}$.
We assume in what follows that $H'$ satisfies this guarantee, thereby conditioning on an event of probability at least $1 - 4/n^2$.
Fix an arbitrary $X \in \mathcal E$ such that $F \le (1-\eps) F^*_X$. We will show that the augmenting path guarantee holds in $H'_X$, which completes the proof.

Note that the augmenting path guarantee is equivalent to showing that for all directed $s$-$t$ cuts $\partial(S)$ in $H'_X$, there is at least one edge from $S$ to $\bar{S}$.
Let $\vec{C} = \partial(S)$ be an arbitrary directed $s$-$t$ cut.
By \cref{lem:residualcuts}, $\vec{C}$ has balance at least $\eps/2$ in $(G \cup X)_f$.
If $\vec{C}$ had balance at least $\eps/2$ even in $G_f$,
then there is at least one edge sampled across this cut in $H'$ by the conditioning above. If not, then the balance of $\vec{C}$ changed between $G_f$ and $(G \cup X)_f$,
which means that an edge was inserted across $\vec{C}$ in $X$ and, thus, the undirected edge is also added to $H'_X$. In either case, $\vec{C}$ has an edge crossing the cut in $H'_X$. Since the choice of cut was arbitrary, this proves the claim.
\end{proof}

\begin{restatable}{lemma}{singlecorrectness}
\label{lem:singlecorrectness}
Fix a phase $k$. Let $i$ be the first time step within the phase when $F < (1-\eps) F^*_i$, where $F^*_i$ is the value of the max flow at time $i$. Then there exists an augmenting path in $H$ with probability $1 - 4/n^2$. The guarantee holds even if an output-adaptive adversary inserts edges in this phase.
\end{restatable}
\begin{proof}
For phase $k$,
we use \cref{lem:anyfutureinsertions} to argue that the guarantee holds in this phase, for any sequence of future edge insertions.
In particular, let $i$ be the first time step within the phase when $F < (1-\eps) F^*_i$. Let $X$ be the sequence of edge insertions in this phase, the augmenting path guarantee holds in the graph $H'_X$ with probability $1-4/n^2$ regardless of the identity of $X$.
Since the graph $H'_X$ from \cref{lem:anyfutureinsertions} is the same as the sampled graph $H$ maintained by \cref{alg:inc_flow} after inserting the edges in $X$ in this phase, the lemma holds.
\end{proof}

Finally, we take a simple union bound to extend the guarantee to all time steps.

\begin{restatable}{lemma}{correctness}
\label{lem:correctness}
\cref{alg:inc_flow} maintains a $(1-\eps)$-approximate $s$-$t$ max flow at all times with probability $1 - 4/n$ against an output-adaptive adversary.
\end{restatable}
\begin{proof}
In any phase $k$, \cref{lem:singlecorrectness} ensures that an augmenting path exists in the sample at the first time step $i$ when $F < (1-\eps) F_i^*$ with probability $1 - 4/n^2$.
Each time an augmenting path is found, $F$ increases by $1$. Thus, we need to invoke \cref{lem:singlecorrectness} for at most $F^* \le n$ phases. Taking a union bound over all the phases, we get that the approximation guarantee holds throughout the entire insertion sequence with probability $1 - 4/n$ as required.
\end{proof}
\section{Incremental Nagamochi-Ibaraki indices}
\label{sec:inc_ni}

In this section, we show how to incrementally maintain (and sample from) Nagamochi-Ibaraki indices $\ell_e$ for each edge $e$.
The main result of this section is the following:
\NIsample*

We maintain Nagamochi-Ibaraki indices instead of other sampling parameters because of their stability under incremental updates. When an edge $e = (u,v)$ is inserted into a graph $G$, it could change other connectivity parameters (such as \edgeconnectivity, strong-connectivity, or the effective resistance) of \emph{every existing edge} in the graph. However, the \niindex of every other edge remains the same on edge insertion, which is a very desirable property for dynamic algorithms. This happens because while other graph properties are unique for an edge in a given graph, the \niindex depends heavily on the particular forest packing used. It could vary wildly for the same edge, and could range from $1$ up to $n$ depending on the packing.

The incremental algorithm involves maintaining a collection of forests, where each forest is a union-find data structure on the vertices currently in that forest. On an edge insertion, we binary search across the forests to find the first one where its endpoints are disconnected, and insert the edge into that forest (and also adding the endpoints to the forest or initializing a new forest if necessary).

Due to the subtleties of implementing this, we present the data structure in full detail below. In particular, if we na\"ively ``initialized'' a new forest by inserting \emph{all} the vertices into the forest, then this would lead to a running time of $\Omega(n)$ per forest, which could be prohibitive. For the extreme case where $n$ parallel edges are inserted between the same two vertices $u$ and $v$, this necessitates initializing $\Omega(n)$ forests, which leads to $\Omega(n^2)$ total time for the na\"ive implementation, as opposed to adding a vertex to a forest only when necessary (as done below with the $last\_tree$ variable).

\begin{algorithm}[ht]
\DontPrintSemicolon
\caption{Maintaining Incremental Nagamochi-Ibaraki Indices (\textsc{IncNIIndex}) }
\label{alg:inc_ni}
\Fn{\fn{Initialize}}{
    $k \gets 0$ \algcomment{Number of forests maintained}
    $tree \gets []$ \algcomment{Collection of forests}
    $last\_tree[v] \gets 0$ for all $v \in V$ \algcomment{last forest that $v$ belongs to}
}
\Fn{\fn{Insert}($e = (u,v)$)}{
    $i \gets $\fn{FindTree($u, v, \min(last\_tree(u), last\_tree(v)))$)} \algcomment{find correct forest for $e$}
    \textbf{if} {$i > k$} \textbf{then} {$k \mathrel+= 1$, $tree[i] \gets \unionfind.\fn{Initialize}$} \algcomment{add new forest}
    \For{$x \in \{ u,v \}$}{
        \textbf{if} {$i > last\_tree(x)$} \textbf{then} {$last\_tree[x] \mathrel+= 1$, $tree[i].\textsc{Add}(x)$} \algcomment{add $x$ to forest $i$}
    }
    $tree[i].\textsc{Union}(u, v)$ \algcomment{connect $u$ and $v$ in forest $i$}
    \textbf{return} $i$
}
\Fn{\fn{FindTree}($u, v, upper\_bound$)} {
    $L \gets 0$, $R \gets upper\_bound$

    \While{$L \le R$} {
        $M \gets \left\lfloor (L+R) / 2 \right\rfloor $ \algcomment{binary search for forest}
        \textbf{if} \fn{IsConnected}$(u, v, M)$ \textbf{then} $L \gets M+1$ \textbf{else} $R \gets M-1$
    }
    \textbf{return} $L$
}
\Fn{\fn{IsConnected}($u, v, i$)}{
    \textbf{if} $i = 0$ \textbf{then return} \textsc{True}
    \textbf{else return} $tree[i].\fn{Find}(u) = tree[i].\textsc{Find}(v)$
}

\end{algorithm}

\begin{algorithm}[ht]
\DontPrintSemicolon
\caption{Incremental Nagamochi-Ibaraki sampling (\fn{IncNISample})}
\label{alg:inc_ni_sampling}

\Fn{\fn{Initialize}}{
    $X \gets $ \hyperref[alg:inc_ni]{\textcolor{black}{\fn{IncNIIndex}}}.\fn{Initialize}\;
    $Y \gets $ \BST.\fn{Initialize}\;
    $L \gets 0$\;
}

\Fn{\fn{Insert}($e = (u,v)$)}{
    $\ell_e \gets \text{X}.\fn{Insert}(e) $\;
    $Y.\fn{Insert}(e, L)$\;
    $L \gets L + \ell_e^{-1}$\;
}

\Fn{\fn{Sample}}{
    $z \gets$ uniform random number in $[0, L]$\;
    $e \gets Y.\fn{Search}(z)$\;
    \textbf{return} $e$
}
\end{algorithm}

\begin{restatable}{lemma}{NImaintain}
\label{lem:NImaintain}
\cref{alg:inc_ni} maintains the NI forest packing at all times, and runs in total time $O(m \cdot \log m \cdot \alpha(n))$
\end{restatable}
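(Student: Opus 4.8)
The plan is to verify two things separately: (i) correctness, i.e.\ that after each \fn{Insert} the collection of forests $tree[1], tree[2], \ldots, tree[k]$ realizes a valid NI forest packing of the current graph, and (ii) the running time bound. For correctness, I would argue by induction on the number of inserted edges that the following invariant holds: $tree[i]$ records exactly the connected components of the $i$-th forest $T_i$ in the NI packing built greedily, and $last\_tree(v)$ is the largest index $i$ such that $v$ is incident to some edge in $T_1 \cup \cdots \cup T_i$ (equivalently, the largest forest $v$ has been touched by). The key combinatorial fact is that the greedy NI rule — place edge $e=(u,v)$ into the first forest $T_i$ in which $u$ and $v$ are not yet connected — is exactly what \fn{FindTree} computes: \fn{IsConnected}$(u,v,M)$ returns whether $u$ and $v$ lie in the same component of $tree[M]$, and the binary search returns the smallest index where they are disconnected. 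I must check that the search range $[0, \min(last\_tree(u), last\_tree(v))]$ is sound: if $i > \min(last\_tree(u), last\_tree(v))$ then at least one of $u, v$ has never appeared in forest $i$, so it is an isolated vertex there and $u, v$ are trivially disconnected — hence the first disconnected index is at most $\min(last\_tree(u), last\_tree(v)) + 1$, and capping the binary search at $\min(last\_tree(u), last\_tree(v))$ and returning $L$ (which can be one past the upper bound) is correct. I also need to confirm that the key stability property — inserting $e$ never changes the forest assignment of any previously inserted edge — holds, which is immediate since \fn{Insert} only ever calls \fn{Union} (never splits a component) and only ever appends to $tree$, so all earlier \fn{Find} answers are preserved.

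For the running time, the dominant cost per \fn{Insert} is the binary search in \fn{FindTree}, which performs $O(\log(\text{upper\_bound})) = O(\log m)$ calls to \fn{IsConnected}, each of which is one or two \fn{Find} operations costing amortized $O(\alpha(n))$ by Fact~\ref{def:unionfind}; the remaining work per insertion (the \fn{Union}, the at most two \textsc{Add}s, updating $last\_tree$) is $O(\alpha(n))$. Summing over $m$ insertions gives $O(m \log m \cdot \alpha(n))$. The one subtlety to address is the total number of forests and of \textsc{Add} operations: a new forest is created only when $i > k$, and a vertex is \textsc{Add}ed to forest $i$ only when $i > last\_tree(x)$, after which $last\_tree[x]$ increases by exactly one; since $last\_tree[x] \le k$ always and $k \le m$ (each new forest is opened by some distinct inserted edge), the number of \textsc{Add} calls is $O(m)$ total, not $O(nk)$ — this is precisely the point of the $last\_tree$ bookkeeping flagged in the discussion before the algorithm, avoiding the naive $\Omega(n^2)$ blowup on instances like $n$ parallel edges. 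Thus the per-insertion amortized cost is $O(\log m \cdot \alpha(n))$ and the total is as claimed.

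I expect the main obstacle to be the correctness invariant rather than the timing: one has to be careful that the greedy NI construction is well-defined incrementally at all — i.e.\ that committing edge $e$ to forest $i$ now is consistent with what the offline greedy packing would do on the final graph. This follows because the NI packing, when built by processing edges in the order they arrive and assigning each to the first forest where its endpoints are disconnected, produces a valid NI forest packing of whatever the current graph is (each $T_i$ is a maximal forest of $G \setminus (T_1 \cup \cdots \cup T_{i-1})$: maximality is exactly the statement that every edge not in $T_1 \cup \cdots \cup T_{i-1}$ has its endpoints connected in $T_i$, which the greedy rule enforces edge-by-edge, and is never violated by later insertions since components only grow). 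Once this is set up, the match between the abstract greedy process and the concrete union-find implementation is routine. A minor point to handle cleanly is the base case $i = 0$ in \fn{IsConnected}, which returns \textsc{True} so that the binary search never probes a non-existent $tree[0]$; this corresponds to the convention that "forest $0$" is the all-connected graph, forcing the search to move rightward from $L = 0$.
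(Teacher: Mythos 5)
Your proposal is correct and follows essentially the same structure as the paper's proof: induction on edge insertions to show the greedy forest assignment is maintained, and amortization of the $O(\log m)$ binary-search probes at $O(\alpha(n))$ apiece to get the total time. You are somewhat more explicit than the paper about two points the paper leaves to the reader — the soundness of capping the binary search at $\min(last\_tree(u), last\_tree(v))$ and the bound on the total number of \textsc{Add} calls via the $last\_tree$ bookkeeping — but the argument is the same one.
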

\begin{proof}
We first argue about correctness by induction. It is clearly true before any edge insertions. Suppose it was true until $k$ insertions. Let $e=(u,v)$ be the $(k+1)$-th inserted edge. If $i$ is the index of the tree returned by \textsc{FindTree}, then it satisfies the property that for all $j < i$, vertices $u$ and $v$ are connected in forest $tree[j]$, and that they are not connected in forest $tree[i]$. Thus connecting $u$ and $v$ in forest $tree[i]$ and setting the index of the edge to $i$ maintains correctness after the $(k+1)$-th edge as well.

Next, we argue about the time taken by the algorithm. If there are $m$ edges in the graph currently, then the algorithm maintains $\le m$ forests. Thus the binary search takes $O(\log m)$ iterations. In each iteration, it queries a union-find data structure if two vertices are connected, which takes $O(\alpha(n))$ amortized time. Union of two elements in, insertion of new elements into, and initialization of a union-find data structure takes constant time. Thus the algorithm runs in total time $O(m \cdot \log m \cdot \alpha(n))$.
\end{proof}

We quickly recall an example of a binary search tree insertion sequence before we prove the next lemma. In a binary search tree, if there are three consecutive insertions of (key, value) pairs $(a, 0), (b, 1/3),$ and $(c, 1/2)$, then searching for any $z \in [0, 1/3)$ returns $a$, searching for $z \in [1/3, 1/2)$ returns $b$, and searching for $z \in [1/2, \infty)$ returns $c$.

\begin{restatable}{lemma}{NIsampleonly}
\label{lem:NIsampleonly}
\cref{alg:inc_ni_sampling} maintains a sample from the correct distribution at all times, each insertion takes time $O(\log (m) \alpha(n))$ and each sample takes time $O(\log n)$.
\end{restatable}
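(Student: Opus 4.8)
The plan is to prove correctness by maintaining an invariant on the state $(X,Y,L)$ of \cref{alg:inc_ni_sampling}, and then to read the running time off directly from \cref{lem:NImaintain} and \cref{def:BST}.

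\textbf{The invariant.} Write $e_1,\dots,e_k$ for the edges inserted so far, in order, and let $\ell_{e_j}$ denote the index returned by $X.\fn{Insert}(e_j)$ at the moment $e_j$ was processed. I would prove by induction on $k$ that after $k$ calls to \fn{Insert} we have (i) $L=\sum_{j=1}^{k}\ell_{e_j}^{-1}$, and (ii) $Y$ holds precisely the key--value pairs $\bigl(e_j,v_j\bigr)$ with $v_j:=\sum_{i<j}\ell_{e_i}^{-1}$. The base case $k=0$ is immediate from \fn{Initialize}, and the inductive step is exactly the body of \fn{Insert}: edge $e_{k+1}$ is inserted into $Y$ with value equal to the \emph{current} $L$, which by the inductive hypothesis equals $v_{k+1}=\sum_{i\le k}\ell_{e_i}^{-1}$, after which $L$ is incremented by $\ell_{e_{k+1}}^{-1}$. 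I would also observe at this point that, by inspection of \cref{alg:inc_ni}, once an edge is attached to a forest it is never removed or relocated, so by \cref{lem:NImaintain} the value $\ell_{e_j}$ recorded at insertion time remains the NI index of $e_j$ in the maintained packing at all later times; this is what identifies the distribution analysed below with the target distribution $\{\ell_e^{-1}/L\}_{e\in E}$ of \cref{lem:NI_sample}.

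\textbf{Correctness of \fn{Sample}.} Since each $\ell_{e_j}^{-1}>0$, the invariant yields $0=v_1<v_2<\dots<v_k<v_{k+1}:=L$, so the half-open intervals $I_j:=[v_j,v_{j+1})$ of length $\ell_{e_j}^{-1}$ partition $[0,L)$. For $z\in I_j$ the largest recorded value that is $\le z$ is $v_j$, hence $Y.\fn{Search}(z)=e_j$; the single point $z=L$ returns $e_k$ and contributes probability $0$. Drawing $z$ uniformly from $[0,L]$ therefore returns $e_j$ with probability $|I_j|/L=\ell_{e_j}^{-1}/L$, which is the claimed distribution.

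\textbf{Running time, and the one point that needs care.} Each \fn{Insert} makes one call to $X.\fn{Insert}$, costing $O(\log m\cdot\alpha(n))$ amortized by \cref{lem:NImaintain}, one call to $Y.\fn{Insert}$, costing $O(\log m)=O(\log n)$ amortized by \cref{def:BST} (the graph being sampled from is simple, so $m=O(n^2)$), and $O(1)$ additional work, for $O(\log m\cdot\alpha(n))$ in total; each \fn{Sample} draws one uniform real and performs one $Y.\fn{Search}$, for $O(\log n)$ amortized time. There is no deep obstacle here; the one subtlety is the claim, used in the invariant step, that $\ell_{e_j}$ never has to be revised after $e_j$ is inserted --- this is exactly why an insert-only binary search tree over prefix sums suffices rather than a structure supporting value updates, and it is precisely the stability property of \cref{alg:inc_ni} combined with \cref{lem:NImaintain}.
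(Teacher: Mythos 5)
Your proof is correct and takes essentially the same approach as the paper: both maintain the prefix sums $\sum_{i<j}\ell_{e_i}^{-1}$ as the values stored in the BST, argue by induction that $Y.\fn{Search}(z)$ returns $e_j$ for $z$ in the length-$\ell_{e_j}^{-1}$ subinterval, and read the running time off from \cref{lem:NImaintain} and the BST operation costs. Your version is marginally cleaner in two small ways --- stating the invariant on $(L,Y)$ explicitly rather than directly asserting what \fn{Search} returns, and using half-open intervals $[v_j,v_{j+1})$ which avoids a boundary ambiguity at $z=s_{i+1}$ that the paper's closed-interval phrasing glosses over --- but these are cosmetic, not a different argument.
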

\begin{proof}

Let $e_1, \ldots, e_m$ be the sequence of edge insertions performed, and let $\ell_{e_i}$ be their corresponding \niindices obtain from \cref{alg:inc_ni}.
The $\ell_e$ values obtained from \cref{alg:inc_ni} are correct by the guarantees of \cref{lem:NImaintain}.
Define $s_1 = 0$ and $s_i = \sum_{1 \le j < i} \ell_{e_j}^{-1}$ be the sum of inverse \niindices of all edges inserted until $e_{i-1}$.
We will show that at any time $k$, for all $i \in \{1, \ldots, k\}$, searching for any $z \in [s_i, s_{i+1}] \subseteq [0, s_{k+1}]$ returns key $e_i$ in the binary search tree.
This then gives correctness of the algorithm since, after the $k$-th insertion, the edge $e_i$ is sampled with probability $(s_{i+1} - s_i) / L = \ell_{e_i}^{-1} / s_{k+1}$ since $L = s_{k+1}$ after $k$ edge insertions.

The statement is clearly true after the first edge insertion.
Suppose it was true until $k-1$ insertions.
Since the key $e_{k-1}$ had value $s_{k-1}$, searching for any $z \in [s_{k-1}, \infty)$ would have given key $e_{k-1}$.
Let $e_k$ be the $k$-th inserted edge, which is inserted with value $s_{k}$. Thus, for all $z \in [s_k, \infty)$, the search returns key $e_k$, and for $z \in [s_{k-1}, s_k)$, the search returns key $e_{k-1}$. As $s_k - s_{k-1} = \ell_{e_{k-1}}^{-1}$, $s_{k+1} - s_k = \ell_{e_k}^{-1}$, and the other edges $e_j$ for $j < k-1$ are not affected by this insertion, the claim follows.

For time bounds, note that each edge insertion into \cref{alg:inc_ni} takes time $O(\log (m) \alpha(n))$, each insertion into the binary search tree with length $\ell_e^{-1}$ takes time $O(\log n)$, and each sample takes time $O(\log n)$ as well, given that the random number is generated in time $O(\log n)$.
\end{proof}

\cref{lem:NI_sample} now follows from \cref{lem:NImaintain} and \cref{lem:NIsampleonly}.
\section{Balanced sparsification}
\label{sec:balanced_sparsification}

While Cen et al.~\cite{CCPS21cutbalance} obtain a cut sparsification result for balanced directed graphs, their result only works with strong connectivity and cannot be used with NI indices, which we need for our algorithm.
In this section, we show the following result.

\generalsparsification*

This is an extension of the result of \cite{FHHP19sparsification} to balanced directed graphs, with the sampling scheme changed from sampling each edge independently with probability $\approx \polylog (n)/(\lambda_e \eps^2)$ to sampling $\approx O( n \polylog(n)/ \eps^2)$ edges with probability distribution proportional to $\lambda_e^{-1}$. The proof proceeds by reducing the problem to showing concentration for each \connectivityclass where the connectivity parameters are close to each other. Inside each \connectivityclass $F_i$, the cuts are then collected by the size of the underlying undirected cut in $G_i$ and concentration is shown for each collection separately. This involves combining the concentration for each cut in the collection, along with a cut-counting argument similar to the one on \cite{FHHP19sparsification} (which is an extension of Karger's cut-counting). Finally, the concentration of each cut is shown using a standard Chernoff bound.

Concretely, we will show later that the following concentration holds for every \connectivityclass.

\begin{restatable}{lemma}{connectivityclassconc}
\label{lem:connectivityclassconc}
It holds for all $i \in [\Lambda]$ and for all cuts $\vec{C} = \partial_{\vec{G}}(S)$ simultaneously with probability at least $1 - 4/n^2$ that
\[
\bound,
\]
where $C = \partial_G(S)$ is the underlying undirected cut of $\vec{C}$ in $G$.
\end{restatable}

We first use it to prove \cref{thm:generalsparsification}.

\begin{proof}[Proof of \cref{thm:generalsparsification}]
We group the edges by the probability it is picked, and show concentration for each such \connectivityclass separately.
With probability $\ge 1 - 4/n^2$, we have for all cuts $\vec{C} = \partial_{\vec{G}}(S)$ of \balance $\beta^{-1}$,
\begin{align*}
\abs{w(\vec{C}) - u(\vec{C})}
&= \abs{\sum_{i = 0}^\Lambda w(\vec{C} \cap F_i) - \sum_{i=0}^\Lambda u(\vec{C} \cap F_i)} \\
&\le \sum_{i = 0}^\Lambda  \abs{w(\vec{C} \cap F_i) - u(\vec{C} \cap F_i)} \\
&\le \frac{\eps}{2} \cdot \left( \sum_{i=0}^\Lambda \frac{u(C \cap E_i) \cdot 2^{i-1}}{\pi_i \cdot \gamma \cdot (\beta+1)} + \sum_{i=0}^\Lambda u(\vec{C} \cap F_i) \right) \tag*{(by \cref{lem:connectivityclassconc})} \\
&\le \frac{\eps}{2} \cdot \left( \frac{u(C)}{\beta+1} + u(\vec{C}) \right) \\
&= \frac{\eps}{2} \cdot \left( \frac{u(\vec{C}) + u(\cev{C})}{\beta+1} + u(\vec{C}) \right) \\
&\le \eps \cdot u(\vec{C}) \tag*{(since $u(\cev{C}) \le \beta \cdot u(\vec{C})$)}
\end{align*}
where the penultimate inequality follows since
\[
\sum_{i=0}^\Lambda \frac{u(C \cap E_i) \cdot 2^{i-1}}{\pi_i \cdot \gamma} \le u(C)
\] by \gammaoverlap and
\[
\sum_{i=0}^\Lambda u(\vec{C} \cap F_i) = u(\vec{C})
\] since $\mathcal F = \{ F_i \}$ is a partition of $E$.
\end{proof}

To show \cref{lem:connectivityclassconc}, we partition the cuts in a connectivity class based on the size of the underlying undirected cut, and show concentration in each of them separately. Formally, we will show later that
\begin{restatable}{lemma}{sectionconc}
\label{lem:sectionconc}
Let $\mathcal C_{ij}$ be the collection of all cuts $\vec{C} = \partial_{\vec{G}}(S)$ such that for the corresponding undirected cut $C = \partial_G(S)$ in $G$,
\[
\pi_i \cdot 2^j \le u(C \cap E_i) \le \pi_i \cdot 2^{j+1} - 1.
\]
Then for all cuts $\vec{C}$ in $\mathcal C_{ij}$ simultaneously, it holds with probability $1 - 2/n^{4 \cdot 2^j}$ that
\[
\bound.
\]
\end{restatable}

We first use this to prove \cref{lem:connectivityclassconc}.

\connectivityclassconc*

\begin{proof}
We will show that the statement holds for any fixed $i$ with probability at least $1 - 4/n^4$.
Since $\mathcal F = \{ F_i \}$ is a partition of the edges, and since there are at most $n^2$ edges in the graph, a union bound over all $i$ with non-empty $F_i$ gives the lemma.

Fix any $i \in [\Lambda]$. For all cuts $\vec{C}$ with no edge in class $F_i$, the claim holds with probability $1$. In what follows, we only consider cuts that have at least one edge in class $F_i$.

We concentrate on the graph $G_i$, and only consider cuts that contain at least $\pi_i$ edges in $E_i$, since every edge in $F_i$ has connectivity at least $\pi_i$ in $G_i$ by \piconnectivity.
Let $\mathcal C_{ij}$ be the collection of all cuts $\vec{C} = \partial_{\vec{G}}(S)$ such that for the corresponding undirected cut $C = \partial_G(S)$ in $G$,
\[
\pi_i \cdot 2^j \le u(C \cap E_i) \le \pi_i \cdot 2^{j+1} - 1
\]
Then by \cref{lem:sectionconc}, for all $\vec{C} \in \mathcal C_{ij}$, it holds with probability at least $1 - 2/n^{4 \cdot 2^j}$ that
\[
\bound
\]
Thus the probability that there exists a $j$ for which there exists a cut $\vec{C} \in \mathcal C_{ij}$ where the bound does not hold is at most
\[
\frac{2}{n^4} \cdot
\left( \frac{1}{n^{2^0}} + \frac{1}{n^{2^1}} + \ldots \right)
\le \frac{4}{n^4}
\]
as required.
\end{proof}

To show \cref{lem:sectionconc}, we show the bound for a single cut in \cref{lem:singlecutconc} and use this with a directed cut counting argument as in \cref{lem:directed_cut_counting}.

\begin{restatable}{lemma}{singlecutconc}
\label{lem:singlecutconc}
For any single cut $\vec{C}$ in $\mathcal C_{ij}$, it holds with probability $1 - 2/n^{8 \cdot 2^j}$ that
\[
\bound.
\]
\end{restatable}

\begin{restatable}{lemma}{directedcutcounting}
\label{lem:directed_cut_counting}
Let $\vec{G} = (V, \vec{E})$ be a directed (multi-)graph, and $G$ be the underlying undirected graph. Let $k \ge \lambda$ be any real number, where $\lambda$ is the value of the global minimum cut in $G$. Consider all the cuts $\vec{C} = \partial_{\vec{G}}(S)$ in $\vec{G}$ such that $u(C) \le \alpha k$ for the corresponding undirected cut $C = \partial_G(S)$, and let $\mathcal C^{\downarrow k}_\alpha$ be the set of all directed \kprojection of these cuts. Then $|\mathcal C^{\downarrow k}_\alpha| \le 2 \cdot n^{2\alpha}$.
\end{restatable}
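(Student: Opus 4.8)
The plan is to reduce the directed $k$-projection counting problem to the undirected cut-counting bound of Karger, applied not to $G$ itself but to a derived multigraph that ``forgets'' all the low-connectivity edges. First I would observe that the $k$-projection $\vec C \cap H_k$ of a directed cut $\vec C = \partial_{\vec G}(S)$ is entirely determined by which edges of $H_k$ cross the partition $(S,\bar S)$ and in which direction; since $H_k \subseteq \vec E$ is a fixed set of edges depending only on $G$, the direction of each crossing edge is determined by $S$, so $\vec C \cap H_k$ is a function of the vertex bipartition $(S,\bar S)$ restricted to how it cuts $H_k$. Hence it suffices to bound the number of distinct \emph{undirected} cuts that $H_k$ can induce, i.e.\ the number of distinct sets $C \cap H_k^{\mathrm{und}}$ over all $S$ with $u(C) \le \alpha k$, where $H_k^{\mathrm{und}}$ is the underlying undirected edge set of $H_k$; each such undirected cut lifts to at most one directed $k$-projection.

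Next I would set up the contraction argument. Let $G'$ be the undirected multigraph on $V$ whose edge set is exactly $H_k^{\mathrm{und}}$ — that is, delete from $G$ every edge whose edge-connectivity in $G$ is strictly less than $k$. The key structural claim is that every nonempty cut in $G'$ has size at least $k$: if $S$ induces a cut in $G'$ with at least one edge $e=(u,v)\in H_k^{\mathrm{und}}$ crossing it, then since $e$ has edge-connectivity $\ge k$ in $G$, the minimum $u$-$v$ cut in $G$ has value $\ge k$, so $\partial_G(S) \ge k$; but every edge of $\partial_G(S)$ that has connectivity $\ge k$ in $G$ is precisely an edge of $H_k^{\mathrm{und}}$ — wait, that's not quite it, because $\partial_G(S)$ may contain low-connectivity edges too. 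The right statement: the cut $\partial_G(S)$ separates $u$ from $v$, so $u(\partial_G(S)) \ge k$. This bounds the $G$-size of the cut, not its $G'$-size, so I need a slightly different route. Instead, I would use that $H_k$ itself, viewed as the undirected graph $G'$, can be analyzed by noting that for the specific vertex pair $(u,v)$ realizing a crossing edge, \emph{all} $k$ edge-disjoint $u$-$v$ paths in $G$ must be cut; this shows $u(\partial_G(S)) \ge k$, and since we are only counting cuts with $u(\partial_G(S)) = u(C) \le \alpha k$, we may apply Karger's bound \emph{in $G$} to the family of cuts of size between $k$ and $\alpha k$: there are at most $n^{2\alpha}$ such cuts, by the standard argument that a random contraction sequence in $G$ (with minimum cut value $\ge$ ... ) — but $G$'s global mincut is only $\lambda$, not $k$, which is exactly the subtlety Fung et al.\ face.

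So the main obstacle, and the step I would spend the most care on, is exactly this mismatch between the global mincut $\lambda$ of $G$ and the threshold $k$. The resolution, following the cut-counting technique in~\cite{FHHP19sparsification} (an extension of Karger's), is: run the randomized edge-contraction process on $G$, but observe that we only care about cuts whose $k$-projection is nonempty, which forces the cut to separate some pair $(u,v)$ with edge-connectivity $\ge k$; such a cut survives the contraction process with probability at least $\binom{n}{2}^{-\alpha}$-type bounds \emph{as if} the mincut were $k$, because the relevant ``local'' connectivity along the cut is $\ge k$ even though globally it may be smaller. Concretely, I would either (a) cite the local-connectivity version of Karger's bound directly, or (b) argue via the following: contract $G$ down uniformly at random to $2\alpha$ vertices; for any fixed target cut $C$ of size $\le \alpha k$ that separates a $k$-connected pair, the probability no edge of $C$ is contracted is $\ge n^{-2\alpha}$ (here using $|C| \le \alpha k$ together with the fact that at every step of the process the current graph still has every remaining vertex of degree $\ge k$ on the relevant side — this is where $k \ge \lambda$ and the $k$-connectivity of the crossing pair is used to lower-bound the number of edges at each contraction step). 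Summing the indicator over the at most $2^{?}$ distinct surviving cuts and taking expectations, the number of distinct such cuts is at most $n^{2\alpha}$. Finally, since each undirected cut of $G$ gives rise to exactly one directed cut $\vec C = \partial_{\vec G}(S)$ for each of the two sides $S$ and $\bar S$, and each such directed cut has one $k$-projection, the count of distinct directed $k$-projections is at most $2 n^{2\alpha}$, which is the claimed bound. I would keep the probabilistic contraction calculation itself terse, citing Karger and Fung et al.\ for the per-step degree lower bound, since that is the routine part; the conceptual content is the reduction (direction is determined by $S$; factor of $2$ for the two sides) plus the observation that nonempty $k$-projection forces $k$-local-connectivity, which is what licenses using $k$ rather than $\lambda$ in the exponent.
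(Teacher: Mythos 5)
Your proof takes the same top--level route as the paper: reduce the directed projection count to the undirected one, paying a factor of~$2$ for orienting each undirected cut as $(S,\bar S)$ or $(\bar S,S)$. But where the paper simply invokes Lemma~\ref{lem:fhhpprojection} of Fung et al.\ as a black box for $|\mathcal K^{\downarrow k}_\alpha| \le n^{2\alpha}$, you spend most of the proposal trying to \emph{rederive} that bound by a contraction argument, and visibly get stuck on exactly the subtlety (global mincut $\lambda$ vs.\ threshold $k$) that Lemma~\ref{lem:fhhpprojection} is designed to absorb. The resolution you are groping toward (``local connectivity licenses using $k$ rather than $\lambda$ in the exponent'') is the content of Fung et al.'s lemma, which is already stated in the paper; you should have just cited it and kept only the reduction step.

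There is also a shared subtlety worth flagging in your claim that ``each such undirected cut lifts to at most one directed $k$-projection'' (the paper makes the analogous ``at most two preimages'' claim). The undirected projection $C\cap H_k$ is only an \emph{edge set}; it does not determine the bipartition $(S,\bar S)$, and distinct bipartitions can induce the same $C\cap H_k$ while orienting those edges differently, producing several distinct directed projections $\vec C\cap H_k$ (e.g.\ if $H_k$ consists of several disjoint $k$-connected pairs, one can independently choose a side for each pair without changing $C\cap H_k$). So the correspondence is not literally $1$-to-$1$ (or $2$-to-$1$). The intended argument needs to track, inside the contraction process underlying Lemma~\ref{lem:fhhpprojection}, the side-assignment of the surviving super-vertices rather than only the resulting edge set; the factor of $2$ then comes from the two labelings of the two final super-vertices. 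Your write-up, like the paper's, glosses over this; if you present it, make the map on which the counting rests precise rather than asserting a direct bijection at the level of projection edge sets.
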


We first use them to show \cref{lem:sectionconc}.

\sectionconc*

\begin{proof}
For any single cut $\vec{C}$ in $\mathcal C_{ij}$, \cref{lem:singlecutconc} shows that with probability at least $1 - 2/n^{8\cdot 2^j}$,
\[
\bound
\]
At this point, we would like to take a union bound over all the cuts in $\mathcal C_{ij}$. Ideally, we would like the number of cuts in this set to grow as $n^{c \cdot 2^j}$.
However, the number of directed cuts $\vec{C}$ that arise might be much larger than that. We can circumvent this, since the inequality only depends on $\vec{C} \cap F_i$, and not $\vec{C}$. We thus focus on working with the distinct subsets $\vec{C} \cap F_i$ that arise, and use these to prove our claim.

We first show that the number of \emph{distinct directed cuts $\vec{C} \cap F_i$ that are encountered} is still $O(n^{c \cdot 2^j})$ using \cref{lem:directed_cut_counting}. In particular, we apply \cref{lem:directed_cut_counting} on the graph $G_i = (V, E_i)$, setting $k = \pi_i$ and $\alpha k = \pi_i \cdot 2^{j+1} - 1$.
The total number of distinct subsets $\vec{C} \cap H_k$ that arise from cuts in $\mathcal C_{ij}$ is at most
\[
2 n^{2 \alpha} < 2 n^{4 \cdot 2^j}
\]
where $H_k$ is the subgraph of $G_i$ of all edges with edge-connectivity at least $\pi_i$.
Since each edge in $F_i$ has edge-connectivity at least $\pi_i$ by \piconnectivity, we have that $F_i \subseteq H_k$. Every non-empty subset of the form $\vec{C} \cap F_i$ has a corresponding subset of the form $\vec{C} \cap H_k$ that is counted above, and for every $\vec{C} \cap H_k$ counted above, there is at most one non-empty subset of the form $\vec{C} \cap F_i$, which proves the claim.

To simplify notation, we use the following definition for the next claim.
Call a subset of directed edges $\vec{R} \subseteq \vec{E}$ to be \emph{bad} for a directed cut $\vec{C} \subseteq \vec{E}$ if $\vec{C} \cap F_i = \vec{R}$ and
\[
\abs{w(\vec{R}) - u(\vec{R})} > \frac{\eps}{2} \cdot \left( \frac{u(C \cap E_i) \cdot 2^{i-1}}{\pi_i \cdot \gamma \cdot (\beta+1)} + u(\vec{R}) \right).
\]
Note that if $\vec{R}$ is bad for any particular directed cut $\vec{C}$, then it is also bad for
\[
\vec{D} = \argmin_{\vec{C'}: (\vec{C'} \cap F_i = \vec{R}) \land (\vec{C'} \in \mathcal C_{ij})} u(C' \cap E_i)
\]
since
\[
\abs{w(\vec{R}) - u(\vec{R})}
> \frac{\eps}{2} \cdot \left( \frac{u(C \cap E_i) \cdot 2^{i-1}}{\pi_i \cdot \gamma \cdot (\beta+1)} + u(\vec{R}) \right)
\ge \frac{\eps}{2} \cdot \left( \frac{u(D \cap E_i) \cdot 2^{i-1}}{\pi_i \cdot \gamma \cdot (\beta+1)} + u(\vec{R}) \right),
\]
where the first inequality is because $\vec{R}$ is bad for $\vec{C}$, and the second inequality is by choice of $\vec{D}$.

Thus, the probability we need to bound is
\begin{align*}
&\ \ \ \ \, \Pr [\text{$\exists$ a cut $\vec{C} \in \mathcal C_{ij}$ such that $\vec{C} \cap F_i$ is bad for $\vec{C}$}] \\
&= \Pr [\text{$\exists$ an $\vec{R} \subseteq \vec{E}$ and a cut $\vec{C} \in \mathcal C_{ij}$ such that $\vec{R}$ is bad for $\vec{C}$}] \\
&\le \sum_{\vec{R} \subseteq \vec{E}} \Pr[\text{$\exists$ a cut $\vec{C} \in \mathcal C_{ij}$ such that $\vec{R}$ is bad for $\vec{C}$}] \\
&= \sum_{\vec{R} \subseteq \vec{E}} \Pr[\text{$\vec{R}$ is bad for $\vec{D} = \argmin_{\vec{C'}: (\vec{C'} \cap F_i = \vec{R}) \land (\vec{C'} \in \mathcal C_{ij})} u(C \cap E_i)$}] \\
&\le 2 n^{4 \cdot 2^j} \cdot \frac{2}{n^{8\cdot 2^j}}
\le \frac{4}{n^{4\cdot 2^j}},
\end{align*}
where the penultimate inequality holds because the number of distinct $\vec{R}$s that can be of the form $\vec{C} \cap F_i$ is at most $2 n^{4 \cdot 2^{j}}$, and for each $\vec{R}$, the probability that $\vec{R}$ is bad for $\vec{D}$ is at most $2/n^{8 \cdot 2^j}$ by \cref{lem:singlecutconc}.
\end{proof}

We will prove a Chernoff bound of the following form in Appendix~\ref{sec:chernoff}.

\begin{restatable}{lemma}{chernoffbound}
\label{lem:chernoffbound}
Let $X_1, \ldots, X_n$ be $n$ independent random variables that take values in the range $[0,M]$ for some $M > 0$.
Then, for any $\eps \in (0, 1)$ and $N \ge E[\sum_i X_i]$, we have
\[
\Pr \left[ \abs{\sum_{i=1}^n X_i - E\left[ \sum_{i=1}^n X_i \right]} > \eps N \right]
\le 2 \exp(- 0.38 \eps^2 N / M).
\]
\end{restatable}

Using this, we prove \cref{lem:singlecutconc}.

\singlecutconc*
\begin{proof}
For each round $j \in \{ 1,2, \ldots \rho\}$, let $X_j$ be the following random variable
\[
X_j =
\begin{cases}
    w_e = (\rho \cdot p_e)^{-1} & \text{if an edge $e \in \vec{C} \cap F_i$ is chosen in round $j$} \\
    0 & \text{if an edge $e \not\in \vec{C} \cap F_i$ is chosen in round $j$}
\end{cases}
\]
Defining $X = \sum_{j=1}^\rho X_j$, we have that $X = w(\vec{C} \cap F_i)$ and $E[X] = u(\vec{C} \cap F_i)$.
Since $\lambda_e < 2^{i+1}$ for any edge $e \in F_i$, we have that
\[
p_e
\ge \frac{\lambda_e^{-1}}{\sum_{e \in E} \lambda_e^{-1}}
\ge \frac{1}{2^{i+1} \cdot \sum_{e \in E} \lambda_e^{-1}}.
\]
This gives an upper bound on the weight as
\[
\max_j |X_j|
= \max_{e \in \vec{C} \cap F_i} w_e \\
= \max_{e \in \vec{C} \cap F_i} \frac{1}{\rho \cdot p_e}
\le \frac{2^{i+1} \cdot \sum_{e \in E} \lambda_e^{-1}}{\rho}.
\]
Recall that in the algorithm, we set
\[
\rho = \generalsamplesize.
\]
Substituting this into the upper bound on the weight, we apply \cref{lem:chernoffbound} to the random variables $X_i$ using the following explicit values of $M$ and $N$.
\[
M = \left( \frac{128 \gamma (\beta+1) \ln n}{0.38 \cdot 2^{i+1} \eps^2} \right)^{-1}
\qquad \text{and} \qquad
N = u(\vec{C} \cap F_i) + \frac{u(C \cap E_i) \cdot 2^{i-1}}{\pi_i \cdot \gamma \cdot (\beta+1)}.
\]
It trivially holds that
\[
N = u(\vec{C} \cap F_i) + \frac{u(C \cap E_i) \cdot 2^{i-1}}{\pi_i \cdot \gamma \cdot (\beta+1)}
\ge u(\vec{C} \cap F_i) = E[X].
\]
Thus
\begin{align*}
&\Pr \left[ \abs{w(\vec{C} \cap F_i) - u(\vec{C} \cap F_i)} > \frac{\eps}{2} \cdot \left( \frac{u(C \cap E_i) \cdot 2^{i-1}}{\pi_i \cdot \gamma \cdot (\beta+1)} + u(\vec{C} \cap F_i) \right) \right] \\
&\le 2 \exp\left(- 0.38 \cdot \left(\frac{\eps}{2}\right)^2 \cdot
\left( \frac{128 \gamma (\beta+1) \ln n}{0.38 \cdot 2^{i+1} \eps^2} \right) \cdot \left( \frac{u(C \cap E_i) \cdot 2^{i-1}}{\pi_i \cdot \gamma \cdot (\beta+1)} + u(\vec{C} \cap F_i) \right)
\right) \\
&\le 2 \exp\left(- 0.38 \cdot \left(\frac{\eps}{2}\right)^2 \cdot
\left( \frac{128 \gamma (\beta+1) \ln n}{0.38 \cdot 2^{i+1} \eps^2} \right) \cdot \left( \frac{u(C \cap E_i) \cdot 2^{i-1}}{\pi_i \cdot \gamma \cdot (\beta+1)}\right)
\right) \\
&\le 2 \exp\left( -8 \cdot \frac{u(C \cap E_i) \cdot \ln n}{\pi_i} \right) \\
&\le 2 \exp\left( -8 \cdot 2^j \cdot \ln n \right) \tag*{(since $\vec{C} \in \mathcal C_{ij} \implies u(C \cap E_i) \ge \pi_i \cdot 2^j$)}
\end{align*}
as required.
\end{proof}

Finally, we use the undirected projection result of \cite{FHHP19sparsification} to prove \cref{lem:directed_cut_counting}.

\directedcutcounting*
\begin{proof}
We first apply \cref{lem:fhhpprojection} on the undirected graph $G$.
Consider the set of all undirected cuts $C = \delta(S, \bar{S})$ in $G$ such that $u(C) \le \alpha k$.
\cref{lem:fhhpprojection} gives that there are at most $n^{2\alpha}$ undirected $k$-projections of such cuts, i.e., $|\mathcal K^{\downarrow k}_{\alpha}| \le n^{2\alpha}$.
We will show that $|\mathcal C^{\downarrow k}_{\alpha}| \le 2 \cdot |\mathcal K^{\downarrow k}_{\alpha}|$, which gives the lemma.

We define a function from the first set to the second set where each image has a pre-image set of size at most two.
Recall that $H_k$ is the set of edges in $G$ of edge-connectivity at least $k$.
We map each directed projection $\vec{C} \cap H_k$ in $\mathcal C^{\downarrow k}_\alpha$ to the corresponding undirected projection $C \cap H_k$. The latter belongs to $\mathcal K^{\downarrow k}_\alpha$ since $u(C) \le \alpha k$ by definition. For each $C \cap H_k \in \mathcal K^{\downarrow k}_\alpha$, there are at most two directed projections that map to it since each undirected cut can be directed as either $(S, \bar{S})$ or $(\bar{S}, S)$.
\end{proof}
\section*{Acknowledgements}
\label{sec:ack}

MH: This project has received funding from the European Research Council (ERC) under the European Union's Horizon 2020 research and innovation programme (MoDynStruct, No. 101019564)  \includegraphics[width=0.9cm]{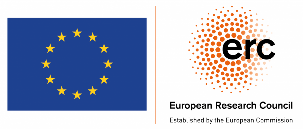} and the Austrian Science Fund (FWF) grant  \href{https://www.doi.org/10.55776/Z422}{DOI 10.55776/Z422}, grant  \href{https://www.doi.org/10.55776/I5982}{DOI 10.55776/I5982}, and grant  \href{https://www.doi.org/10.55776/P33775}{DOI 10.55776/P33775} with additional funding from the netidee SCIENCE Stiftung, 2020–2024.
For open access purposes, the author has applied a CC BY public copyright license to any author-accepted manuscript version arising from this submission. Views and opinions expressed are however those of the author(s) only and do not necessarily reflect those of the European Union or the European Research Council Executive Agency. Neither the European Union nor the granting authority can be held responsible for them.
\bibliographystyle{gamma}
\def\bibfont{\small}
\bibliography{Ref}

\appendix
\section{Chernoff Bounds}
\label{sec:chernoff}

We use the following Chernoff bound from~\cite{FHHP19sparsification} to obtain \cref{lem:chernoffbound}.

\begin{restatable}[\cite{FHHP19sparsification}]{lemma}{fhhpchernoffbound}
\label{lem:fhhpchernoffbound}
Let $Y_1, \ldots, Y_n$ be $n$ independent random variables each taking values in $[0,1]$. Let $\mu = E[\sum_i Y_i]$. Then for all $\delta > 0$, we have
\[
\Pr \left[ \abs{\sum_{i =1}^n Y_i - \mu} > \delta \mu \right]
\le 2 \exp \left( - 0.38 \min \{ 1, \delta \} \delta \mu \right).
\]
\end{restatable}

Recall the bound we want to prove.

\chernoffbound*
\begin{proof}
Define $Y_i = M^{-1} \cdot X_i$. Then $Y_i \in [0,1]$ and $\mu = E[\sum_i Y_i] = M^{-1} \cdot E[\sum_i X_i]$. Applying \cref{lem:fhhpchernoffbound} for $\delta = \frac{\eps N}{E[\sum_i X_i]}$, we get that
\[
\Pr \left[ \abs{\sum_{i = 1}^n X_i - E\left[ \sum_{i = 1}^n X_i \right] } > \eps N \right]
\le 2 \exp \left( - 0.38 \min \{ 1, \delta \} \delta \mu \right).
\]
If $\delta \le 1$, then using the fact that $N \ge E[\sum_i X_i]$, the bound becomes
\[
2 \exp \left( - 0.38 \cdot \left( \frac{\eps N}{E[\sum_i X_i]} \right)^2 \cdot M^{-1} \cdot E\left[ \sum_i X_i \right] \right)
\le 2 \exp \left( - 0.38 \eps^2 N / M \right),
\]
and if $\delta > 1$, then the bound becomes
\[
2 \exp \left( - 0.38 \cdot \left( \frac{\eps N}{E[\sum_i X_i]} \right) \cdot M^{-1} \cdot E\left[ \sum_i X_i \right] \right)
\le 2 \exp \left( - 0.38 \eps N / M \right),
\]
where the final bound follows since $\eps^2 < \eps$
for $\eps \in [0,1]$.
\end{proof}

\end{document}